\newtheorem{theorem}{Theorem}
\newtheorem{lemma}[theorem]{Lemma}
\newtheorem{remark}[theorem]{Remark}
\newcommand\blfootnote[1]{%
	\begingroup
	\renewcommand\thefootnote{}\footnote{#1}%
	\addtocounter{footnote}{-1}%
	\endgroup
}
\begin{document}
	
\title{A model for the dynamics of COVID-19 infection transmission 
in human with latent delay\blfootnote{This is a preprint 
of a paper whose final and definite form is published in 
'Afrika Matematika' at {\tt https://doi.org/10.1007/s13370-024-01226-0}.
Submitted 02-07-2022; Revised 12-09-2024; Accepted 14-12-2024.}}

\author{Amar N. Chatterjee$^{1}$\\ {\tt anchaterji@gmail.com}
\and Teklebirhan Abraha$^{2}$\\ {\tt tekbir98@yahoo.com}
\and Fahad Al Basir$^{3}$\\ {\tt fahadbasir@gmail.com}
\and Delfim F. M. Torres$^{4}$\thanks{Corresponding author. 
Email: delfim@ua.pt}\\ {\tt delfim@ua.pt}}
	
\date{$^{1}$\small{\mbox{Department of Mathematics, K.L.S. College, 
Nawada, Magadh University, Bodhgaya, Bihar-805110, India}}\\[0.2cm]
$^{2}$\small{Department of Mathematics, 
Aksum University, Aksum, Ethiopia}\\[0.2cm]
$^{3}$\small{Department of Mathematics, Asansol Girls' College, 
West Bengal 713304, India}\\[0.2cm]
$^{4}$\small{R\&D Unit CIDMA, Department of Mathematics, 
University of Aveiro, 3810-193 Aveiro, Portugal}}
	
\maketitle
	
	
\begin{abstract}
In this research, we have derived a mathematical model for within human dynamics 
of COVID-19 infection using delay differential equations. The new model considers a 'latent period' 
and 'the time for immune response' as delay parameters, allowing us to study the effects 
of time delays in human COVID-19 infection. We have determined the equilibrium points and analyzed 
their stability. The disease-free equilibrium is stable when the basic reproduction number, $R_0$, 
is below unity. Stability switch of the endemic equilibrium occurs through Hopf-bifurcation. 
This study shows that the effect of latent delay is stabilizing whereas 
immune response delay has a destabilizing nature.

\bigskip
		
\noindent \textbf{Keywords:} mathematical modeling; time delays; 
stability; Hopf-bifurcation; numerical simulations.

\medskip

\noindent \textbf{MSC:} 34H20; 34K20; 92-10.
\end{abstract}
	
	
\section{Introduction}

The world is still suffering from the ongoing pandemic coronavirus disease (COVID-19)
caused by the SARS-CoV-2 virus. The outbreak of COVID-19 started in Wuhan China,
in December 2019, and now it has been spread over more than 226 countries and territories.
COVID-19 is a rapidly spreader infectious disease that threatens the health system of mankind. 
Infected people experience mild and moderate symptoms and recover without special treatment. 
However, the worst condition appears for the patient who is suffering from chronic diseases 
like heart, kidney, and lungs disease. These people become seriously ill 
and they need medical attention like oxygen support.

The virus is spread from infected person's mouth when they cough, sneeze and speak through droplets. 
By touching a contaminated surface or by breathing near some COVID-19 infected person, one may be 
infected. The crowded and indoor environment helps the virus for community's spreading. 
The World Health Organization (WHO) has approved several COVID-19 vaccines and the first 
mass vaccination program started in early December 2020. The listed vaccines approved by WHO 
are Pfizer/BioNTech Comirnaty, SII/COVISHIELD, AstraZeneca/AZD1222 vaccines, Janssen/Ad26.COV 2.S, 
Moderna COVID-19 vaccine (mRNA 1273), Sinopharm COVID-19 vaccine,
Sinovac-CoronaVac, and COVAXIN vaccine \cite{CDC:vaccines}.

In spite of vaccination, some people are being infected by different variants of the SARS-CoV-2 virus. 
Thus, proper non-pharmaceutical intervention and maintaining Standard Operating Procedures (SOP)
can only reduce the chances of reinfection. Recently, The Defence Research Development Organisation 
(DRDO, India) gave the license to Granules India to manufacture COVID-19 treatment drug, 
2-Deoxy-D-Glucose (2-DG). It is an antiviral and anti-inflammatory drug (see Figure~\ref{fig1}).
\begin{figure}
\centering
\includegraphics[scale=0.4]{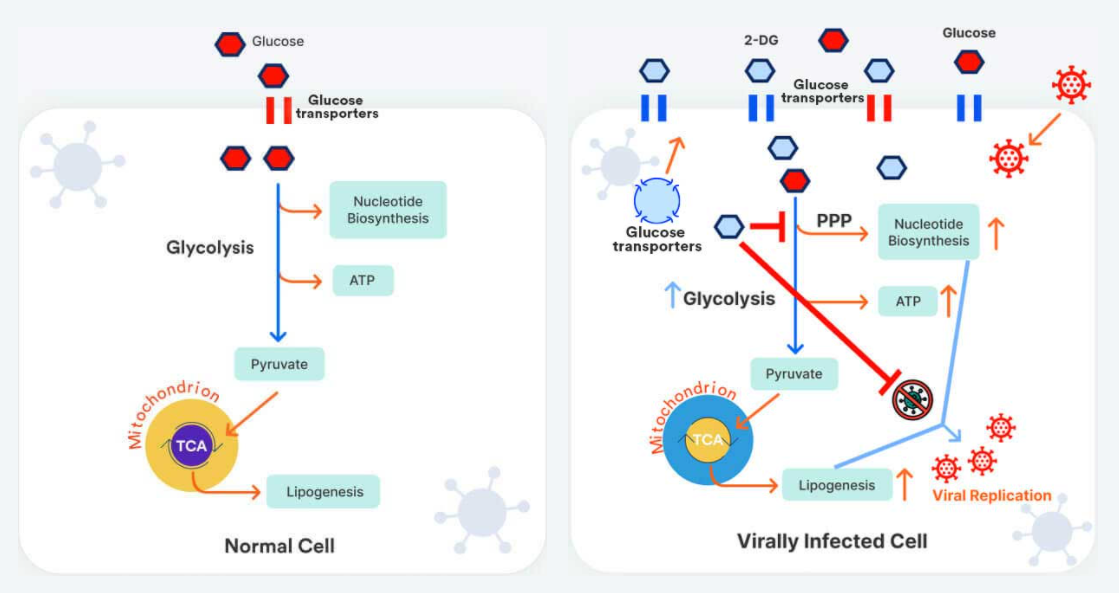}
\caption{Role of 2-DG -- an antiviral effect as it acts on host cells. Source: \cite{2dg}.}
\label{fig1}
\end{figure}
Based on Phase II and Phase III trials, 2-DG has received emergency use authorization
to be a useful adjunct therapy in moderate COVID-19 patients in the hospital settings.

In the health sector, mathematical models are used to understand the disease dynamics
and help for decision-making for controlling the infection \cite{book:Covid-19,PKT1,PKT2}.
Currently, there are many mathematical models used to explain the disease progression
of the COVID-19 pandemic, for more details we refer the reader, e.g., to
\cite{PKT3,PKT4,Samui,Mondal1,Rahman,tang2017novel}. When the COVID-19 has arrived, 
researchers mainly working on mathematical biology have formulated mathematical models 
to study the spread of the pandemic, prevention and control of the pandemic, 
and the influence of prevention of measures 
\cite{PKT5,Zine,Mahrouf,Ndairou01,Silvaetal,Mondal3,Cantin}.
The mathematical study of intrahost viral dynamics of COVID-19 infection dynamics
plays a pivotal role to develop drugs or vaccines. Also, these studies help us to find the role 
of the existing drugs against COVID-19 infection. Mathematical modeling with real data contributes 
to widely discover the dynamical characteristics of an infection at cellular level 
\cite{Mondal1,qureshi2019fractional,qureshi2019transmission,Roy2,Chatterjee03,Roy1,Ghosh,MyID:480,Wang}.

Tang, Ma and Bai developed a model of virus and host immune response
dynamics, incorporating the concentration of DPP4 receptors for MARS-CoV infection \cite{tang2017novel}. 
Chatterjee and Basir \cite{Amar}, and Mondal, Samui and Chatterjee \cite{Mondal3} studied a mathematical 
model indicating the dynamical activities of epithelial cells during SARS-CoV-2 infection in the presence 
of cytotoxic T lymphocytes (CTL) response. Hernandez-Vargas and Velasco-Hernandeza described a model 
in cellular level dynamics of SARS-CoV-2 and T cell responses against the viral replication during 
the COVID-19 infection \cite{hernandez2020host}. The effect of the several pathogenic characteristics 
of SARS-CoV-2 and host immune response is studied by Wang et al. \cite{wang2020modeling}. They also 
evaluated and observed that anti-inflammatory treatment strategies, or a combination of antiviral 
drugs with interferon, are effective in reducing the plateau phase in viral load \cite{wang2020modeling}. 
Chatterjee and co-authors proposed a fractional differential equations model in cellular level,
accounting the lytic and non-lytic effects of immune response in the kinetics of the model, 
exploring the effect of a commonly used antiviral drug in COVID-19 treatment, while applying 
an optimal control-theoretic approach \cite{Chatterjee1,Chatterjee2}. 
In the previous models of COVID-19, time lag was not considered by the researchers. 
Here, we study the effects of time delays in human COVID-19 infection.

In SARS-CoV-2 infection, the virus life cycle plays a crucial role in disease progression.
The binding of a viral particle to a receptor on a target epithelial cell initiates
a series of events that can ultimately lead to the target cell becoming productively infected, i.e., 
producing new virus. In the previous models, it is assumed that the process
of infection is instantaneous. In other words, it is assumed that as soon as the virus
contacts a target cell, then the cell begins producing virus. However, in reality,
there is a time lag between initial viral entry into a cell and the time the cell
become virus producing. For this reason, in this work we consider this delay
to determine accurately the half life of free virus from drug perturbation experiments.
We will show that the delay affects the estimated value for the infected T-cell loss rate
when one assumes the drug is not completely effective.

We incorporate the intracellular delay in the model proposed in \cite{Amar}
by assuming a latent delay $\tau_1>0$ that means the generation of virus producing cells 
at time $t$ is due to the infection of target cells at time $(t-\tau_1)$. We also consider 
an immune response delay $\tau_2>0$ to account for the time needed to activate CTL responses, 
i.e., the CTL response against pathogens at time $t$ were activated by the infected 
epithelial cells at time $(t-\tau_2)$.

Our study is organized in the following manner. In Section~\ref{sec2}, a compartmental model 
of intrahost immune responses on SARS-CoV-2 viral dynamics is proposed. Then, we modify the 
proposed model taking into account the presence of delays. Section~\ref{sec3} deals with 
the analysis of the model, such as non-negativity and boundedness of all the solutions 
of the system. In Section~\ref{sec4}, the delayed induced mathematical model is analysed.
Numerical simulations are performed in Section~\ref{sec5} to examine whether the analytical 
results are associated with the numerical findings or not. Finally, we discuss about 
our inclusive analytical and numerical results in Section~\ref{sec6}.


\section{Model derivation}
\label{sec2}

The base model was proposed by Chatterjee and Al Basir \cite{Amar}.
Here, we extend the model in \cite{Amar} using a saturated infection
rate and two time delay factors.

Chatterjee and Al Basir \cite{Amar} proposed the model considering
the interaction between epithelial cells and SARS-CoV-2 virus along
with CTL responses over the infected cells. Five populations, namely,
the uninfected epithelial cells $T(t)$, infected cells $I(t)$,
angiotensin-converting enzyme 2 (ACE2) receptor of the epithelial
cells $E(t)$, SARS-CoV-2 virus $V(t)$
and CTLs against the pathogen, denoted as $C(t)$, were considered.
Mathematically, the model by Chatterjee and Al Basir is as follows:	
\begin{equation}
\label{mod1}
\begin{cases}
\displaystyle \frac{dT}{dt}=\lambda_1 - \beta E V T - d_T T,\\[0.3cm]
\displaystyle \frac{dI}{dt}=\beta E V T - d_I I - p I C,\\[0.3cm]
\displaystyle \frac{dV}{dt}= m d_I I - d_V V,\\[0.3cm]
\displaystyle \frac{dE}{dt}= \lambda_2 - \theta \beta E V T - d_E E,\\[0.3cm]
\displaystyle \frac{dC}{dt}=\alpha I C \left(1 - \frac{C}{C_{\max}}\right) - d_c C.
\end{cases}
\end{equation}
It is assumed that the susceptible epithelial cells are produced at a rate
$\lambda_1$ from the precursor cells and die at a rate $d_T$. The susceptible
cells become infected at a rate $\beta E(t) V(t) T(t)$. The infection rate
is inhibited by the nonlytic effect of CTLs at a rate $1+qC$, where $q$ is
the efficacy of the nonlytic effect. The constant $d_I$ is the death rate
of the infected epithelial cells. Infected cells are also cleared by the lytic
effect of the body's defensive CTLs at a rate $p I C$, where $p$ is the efficacy
of lytic effect. The infected cells produce new viruses at the rate $m d_I$ during their life;
and $d_V$ is the death rate of new virions, where $m$ is any positive integer. It is also assumed
that ACE2 is produced from the surface of uninfected epithelial cells at the constant rate $\lambda_2$
and the ACE2 is destroyed, when free viruses try to infect uninfected cells, at the rate
$\theta \beta E(t) V(t) T(t)$ and is hydrolyzed at the rate $d_E E$.
The CTL proliferation, in the presence of infected cells, is described by the term
\[
\alpha I C\left(1-\frac{C}{C_{\max}}\right),
\]
which shows the antigen-dependent proliferation. Here, we consider the logistic
growth of CTL with $C_{\max}$ as the maximum concentration of CTL and $d_C$ 
its rate of decay. We now make the following additional considerations to derive the desired model.

As discussed in the Introduction, we incorporate the intracellular delay in the model 
given by \eqref{mod1} by assuming that the generation of virus producing cells at time $t$ is due to
the infection of target cells at time $t-\tau_1$, where $\tau_1$ is a constant.
Here $e^{-d_I\tau_1}$ represents the survival probability during the latent time $\tau_1$.
	
We also consider an immune delay $\tau_2$ to account for the time needed to activate
CTL responses, i.e., the CTL response against pathogens at time $t$ were activated
by the infected epithelial cells at time $t-\tau_2$, where $\tau_2$ is constant.
	
The equations describing this new model are then given by
\begin{equation}
\label{mod2}
\begin{cases}
\displaystyle \frac{dT}{dt}=\lambda_1 - (1-\epsilon_1)\frac{\beta E V T}{1+q C}  - d_T T,\\[0.3cm]
\frac{dI}{dt}=(1-\epsilon_1)\frac{e^{-d_I\tau_1}\beta E(t-\tau_1)
\displaystyle V(t-\tau_1) T(t-\tau_1)}{1+q C(t-\tau_1)} - d_I I - p I C,\\[0.3cm]
\displaystyle \frac{dV}{dt}= (1-\epsilon_2)m d_I I - d_V V,\\[0.3cm]
\displaystyle \frac{dE}{dt}= \lambda_2 - \theta \beta E V T - d_E E,\\[0.3cm]
\displaystyle \frac{dC}{dt}=\alpha I(t-\tau_2)
C(t-\tau_2) \left(1 - \frac{C(t-\tau_2)}{C_{\max}}\right) - d_c C,
\end{cases}
\end{equation}
where $\tau_1$ and $\tau_2$ are the latent and the immune delays, respectively.
A short description of the model parameters and their values
is shown in Table~\ref{table1}.
\begin{table}[H]
\begin{center}
\caption{Description of the parameters of the proposed model \eqref{mod2}
with the values used for numerical simulations.}
\begin{tabular}{|c|c|c|} \hline
Parameters  & Description  & Value \\ \hline
$\lambda_1$ & Production rate of uninfected cell & 5   \\
$\lambda_2$ & Production rate of ACE2 & 1  \\
$\beta$ & Disease transmission rate& 0.0001 \\
$\theta$ & Bonding rate of ACE2& 0.2 \\
$d_T$ & Death rate of uninfected cells & 0.1 \\
$d_I$ & Death rate of infected cells & 0.1 \\
$d_V$ & Removal rate of virus &  0.1   \\
$d_E$ & Hydrolyzing rate of epithelial cells & 0.1 \\
$d_c$ & Decay rate of CTL & 0.1 \\
$p$&Killing rate of infected cells by CTL&0.01\\
$m$&Number of new virions produced&20\\
$\alpha$&Proliferation rate of CTL&0.22\\
$C_{\max}$ & Maximum proliferation of CTL & 100 \\ \hline
\end{tabular}
\label{table1}
\end{center}
\end{table}


\section{Analysis of the Model without delay}
\label{sec3}

In this section, we analyze the dynamics of the system without delays, i.e.,
system \eqref{mod1}. We derive the basic reproduction number for the system
and the stability of equilibria is discussed using this threshold.


\subsection{Existence of equilibria}

Model \eqref{mod1} has three steady states, namely, (i) the disease-free equilibrium,
\(E_0\left(\frac{\lambda_{1}}{d_{T}},0,0,\frac{\lambda_{2}}{d_{E}},0\right)\),
(ii) the CTL response--free equilibrium \(E_1(\tilde{T},\tilde{I},\tilde{V},\tilde{E},0)\),
where
\[
\begin{aligned}
\tilde{T}&={\frac {d_{{V}}}{\beta\,m \tilde{E}}},\\
\tilde{I}&={\frac {\beta\,m\lambda_{{1}}\tilde{E}-d_{{T}}d_{{V}}}{\beta\,md_{{I}}\tilde{E}}},\\
\tilde{V}&={\frac {\beta\,m\lambda_{{1}}-d_{{T}}d_{{V}}\tilde{E}}{\beta\,d_{{V}}\tilde{E}}},\\
\tilde{E}&={\frac {-\big(\beta\,m\theta\,\lambda_{{1}}-\beta\,m\lambda_{{2}}\big)
+\sqrt { \big( \beta\,m\theta\,\lambda_{{1}}-\beta\,m\lambda_{{2}}\big)^{2}
+4\,d_{{E}}\beta\,m\theta\,d_{{T}}d_{{V}}}}{2\,\beta\,m\,d_{{E}}}},
\end{aligned}
\]
and (iii) the endemic equilibrium \(E^*\) which is given by
\[
\begin{aligned}
T^*=&{\frac {\lambda_{{1}}\alpha-\alpha\,pC_{{\max}}I^*-d_{{I}}\alpha I^*
+pC_{{\max}}d_{{c}}}{\alpha\,d_{{T}}}},\\
V^*=&{\frac {md_{{I}}I^*}{d_{{V}}}},	\\
E^*=&{\frac {\lambda_{{2}}{\alpha}^{2}I^*-\theta\, \left( \alpha\,I^*
+q \left(\alpha\,I^*-d_{{c}} \right) C_{{\max}} \right)  \left( \alpha\,d_{{I}}
+p\left( \alpha\,I^*-d_{{c}} \right) C_{{\max}} \right) }{{\alpha}^{2}d_{{E}}I^*}},\\
C^*=&{\frac {C_{{\max}} \left( \alpha\,I^*-d_{{c}} \right) }{\alpha\,I^*}},
\end{aligned}
\]
where \(I^*\) is the positive root of the fourth degree equation
\begin{equation}
\label{eq3}
a_0 I^4+a_1 I^3+a_2 I^2+a_3 I+a_4=0
\end{equation}
with
\[
\begin{aligned}
a_0=&{\alpha}^{3}\beta\,m\theta\,d_{{Z}}
\left( pC_{{\max}}+d_{{Z}}\right) ^{2} \left( qC_{{\max}}+1 \right),\\
a_1=&- \left( 3\,pq\theta\,d_{{c}}{C_{{\max}}}^{2}+\theta\, \left( \alpha\,
q\lambda_{{1}}+qd_{{Z}}d_{{c}}+2\,pd_{{c}} \right) C_{{\max}}+\alpha\,
\left( \theta\,\lambda_{{1}}+\lambda_{{2}} \right)  \right) d_{{Z}}m{
\alpha}^{2} \left( pC_{{\max}}+d_{{Z}} \right) \beta,\\
a_2=&2\,d_{{Z}}mq\alpha\,{C_{{\max}}}^{2}pd_{{c}} \left(  \left( 3/2\,
pC_{{\max}}+d_{{Z}} \right) d_{{c}}+\lambda_{{1}}\alpha \right) \theta\,\beta\\
&{}+\left( -C_{{\max}}pq{\alpha}^{2}d_{{E}}d_{{T}}d_{{V}}+md_{{Z}}d_{{c}}
\lambda_{{1}}\beta\,\theta\, \left( qd_{{Z}}+p \right) \alpha+md_{{Z}}
{d_{{c}}}^{2}C_{{\max}}{p}^{2}\beta\,\theta \right) \alpha\,C_{{\max}}\\
&{}+{\alpha}^{2} \left(  \left(  \left( -qC_{{\max}}d_{{E}}d_{{T}}d_{{V}}
+\beta\,m\lambda_{{1}}\lambda_{{2}}-d_{{E}}d_{{T}}d_{{V}} \right)
d_{{Z}}-C_{{\max}}pd_{{E}}d_{{T}}d_{{V}} \right) \alpha+md_{{c}}C_{{\max}}
\lambda_{{2}}p\beta\,d_{{Z}} \right),\\
a_3=&- \left( md_{{Z}}{d_{{c}}}^{2}{C_{{\max}}}^{2}{p}^{2}q\beta\,\theta
+pq \alpha\, \left( \beta\,m\theta\,d_{{Z}}d_{{c}}\lambda_{{1}}-2\,\alpha
\,d_{{E}}d_{{T}}d_{{V}} \right) C_{{\max}}-{\alpha}^{2}d_{{E}}d_{{T}}
d_{{V}} \left( qd_{{Z}}+p \right)  \right) d_{{c}}C_{{\max}},\\
a_4=&-\alpha\,pq{C_{{\max}}}^{2}d_{{E}}d_{{T}}d_{{V}}{d_{{c}}}^{2}.
\end{aligned}
\]

\begin{remark}
We have \(a_0 >0\) and \(a_4 <0\). Thus, equation \eqref{eq3} has at least
one positive real root. For a feasible endemic equilibrium, we also need
\[
\frac{d_c}{\alpha}<I^*
<\min\biggl\{\frac{\lambda_{{1}}\alpha+pd_c C_{\max}}{\alpha(pC_{\max}+d_I)},
\frac{\alpha d_I+qd_c C_{\max}}{\theta \alpha-\theta\alpha q C_{\max}-\lambda^2\alpha^2}\biggr\}.
\]
\end{remark}


\subsection{Stability of the equilibria}

For the stability of equilibria, we need the nature of the roots of the characteristic
equation of the Jacobian matrix at any equilibrium point $E(T, I, V, E,C)$ \cite{Tekle}.
We linearize the system \eqref{mod1} using the Jacobian technique.
The basic threshold number $R_0$ of system \eqref{mod1} is defined
by the spectral radius of the matrix $FV^{-1}$, and is given by \cite{Abraha}:
\begin{equation}
\label{eq:R0}
R_0={\frac {m\beta\,\lambda_{{1}}\lambda_{{2}}}{d_{{V}}{d_{{E}}}d_{{T}}}}.
\end{equation}
The Jacobian matrix of the system \eqref{mod1} at any equilibrium point \(E(T, I, V, E,C)\)
is given by
\begin{equation}
\label{jacobian}
J(E)=\left[ \begin {array}{ccccc} -{\frac {\beta\,EV}{Cq+1}}-d_{{T}}&0&-{
\frac {\beta\,ET}{Cq+1}}&-{\frac {\beta\,VT}{Cq+1}}&{\frac {\beta\,EVT
q}{ \left( Cq+1 \right) ^{2}}}\\ \noalign{\medskip}{\frac {\beta\,EV}{
Cq+1}}&-Cp-d_{{I}}&{\frac {\beta\,ET}{Cq+1}}&{\frac {\beta\,VT}{Cq+1}}
&-{\frac {\beta\,EVTq}{ \left( Cq+1 \right) ^{2}}}-pZ\\
\noalign{\medskip}0&md_{{I}}&-d_{{V}}&0&0\\ \noalign{\medskip}-
\theta\,\beta\,EV&0&-\beta\,ET\theta&-\beta\,VT\theta-d_{{E}}&0\\
\noalign{\medskip}0&\alpha\,C \left( 1-{\frac {C}{C_{{\max}}}}
\right) &0&0&\alpha\,I \left( 1-\,{\frac 2{C}{C_{{\max}}}} \right)
-d_{{c}}\end {array} \right].
\end{equation}

\begin{theorem}
The disease free equilibrium 
\(E_0\left(\frac{\lambda_{1}}{d_{T}},0,0,\frac{\lambda_{2}}{d_{E}},0\right)\)
is stable for \(R_{0}<1\) and unstable for \(R_{0}>1\).
\end{theorem}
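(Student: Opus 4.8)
\section*{Proof proposal}

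The plan is to linearize system~\eqref{mod1} at $E_0$ and locate the spectrum of the Jacobian~\eqref{jacobian}: $E_0$ is locally asymptotically stable iff every eigenvalue has negative real part, and unstable once some eigenvalue has positive real part. First I substitute $T=\lambda_1/d_T$, $I=V=C=0$, $E=\lambda_2/d_E$ into~\eqref{jacobian}. Every entry carrying a factor $V$, $I$ or $C$ then vanishes (in particular the term written ``$pZ$'' there is $pI$, which drops out), leaving
\[
J(E_0)=\begin{bmatrix}
-d_T & 0 & -\dfrac{\beta\lambda_1\lambda_2}{d_Td_E} & 0 & 0\\[2mm]
0 & -d_I & \dfrac{\beta\lambda_1\lambda_2}{d_Td_E} & 0 & 0\\[2mm]
0 & m d_I & -d_V & 0 & 0\\[2mm]
0 & 0 & -\dfrac{\theta\beta\lambda_1\lambda_2}{d_Td_E} & -d_E & 0\\[2mm]
0 & 0 & 0 & 0 & -d_c
\end{bmatrix}.
\]

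Next I exploit the structure: the first, fourth and fifth columns of $J(E_0)$ contain only their diagonal entry. Cofactor expansion along these three columns yields
\[
\det\big(\mu\,\mathbb{I}-J(E_0)\big)=(\mu+d_T)(\mu+d_E)(\mu+d_c)\,\chi(\mu),
\]
where $\chi$ is the characteristic polynomial of the $2\times2$ block $M=\left[\begin{smallmatrix}-d_I & \beta\lambda_1\lambda_2/(d_Td_E)\\ m d_I & -d_V\end{smallmatrix}\right]$ acting on the $(I,V)$ coordinates. Hence three eigenvalues are $-d_T,-d_E,-d_c<0$, and the remaining two are the roots of
\[
\chi(\mu)=\mu^2+(d_I+d_V)\,\mu+\Big(d_Id_V-\frac{m\beta d_I\lambda_1\lambda_2}{d_Td_E}\Big)=\mu^2+(d_I+d_V)\,\mu+d_Id_V(1-R_0),
\]
the last equality using the definition~\eqref{eq:R0} of $R_0$.

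Finally I apply the Routh--Hurwitz criterion to the quadratic $\chi$: both roots lie in the open left half-plane iff $d_I+d_V>0$ (automatic, since all parameters are positive) and $d_Id_V(1-R_0)>0$, i.e.\ iff $R_0<1$; combined with $-d_T,-d_E,-d_c<0$ this gives asymptotic stability of $E_0$ for $R_0<1$. When $R_0>1$ the constant term $d_Id_V(1-R_0)$ is negative, so $\chi$ has two real roots of opposite sign (their product equals $d_Id_V(1-R_0)<0$); therefore $J(E_0)$ has an eigenvalue with positive real part and $E_0$ is unstable. There is no real obstacle here beyond bookkeeping: the only points needing care are confirming that all nonlinear entries of~\eqref{jacobian} genuinely vanish at $E_0$ (so the block/triangular decoupling is legitimate) and checking the algebra that the constant term of $\chi$ regroups exactly as $d_Id_V(1-R_0)$.
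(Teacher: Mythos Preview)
Your proof is correct and follows essentially the same route as the paper: linearize at $E_0$, factor off the three trivial eigenvalues $-d_T,-d_E,-d_c$, and apply Routh--Hurwitz to the remaining quadratic in the $(I,V)$ block. Your write-up is in fact a bit sharper than the paper's, since you regroup the constant term as $d_Id_V(1-R_0)$ and give an explicit argument for the positive eigenvalue when $R_0>1$.
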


\begin{proof}
From \eqref{jacobian}, the Jacobian matrix at the disease-free equilibrium point
\(E_0\left(\frac{\lambda_{1}}{d_{T}},0,0,\frac{\lambda_{2}}{d_{E}},0\right)\)
is given by
\[
J({E_0})=\left[ \begin {array}{ccccc} -d_{{T}}&0&-{\frac {\beta\,\lambda_{{1}}
\lambda_{{2}}}{d_{{T}}d_{{E}}}}&0&0\\ \noalign{\medskip}0&-d_{{I}}&{
\frac {\beta\,\lambda_{{1}}\lambda_{{2}}}{d_{{T}}d_{{E}}}}&0&0\\
\noalign{\medskip}0&{\it md}_{{I}}&-d_{{V}}&0&0\\
\noalign{\medskip}0&0
&-{\frac {\beta\,\lambda_{{1}}\lambda_{{2}}\theta}{d_{{T}}d_{{E}}}}&-d_{{E}}&0\\
\noalign{\medskip}0&0&0&0&-d_{{c}}\end {array} \right],
\]
whose characteristic equation in $\rho$ is given by
\begin{equation*}
\bigl|\rho I-J(E_0)\bigr|
=\left|\begin{array}{ccccc}
\rho+d_{{T}}&0&{\frac {\beta\,\lambda_{{1}}
\lambda_{{2}}}{d_{{T}}d_{{E}}}}&0&0\\ \noalign{\medskip}0&\rho
+d_{{Z}}&-{\frac {\beta\,\lambda_{{1}}\lambda_{{2}}}{d_{{T}}d_{{E}}}}&0&0\\
\noalign{\medskip}0&-{\it md}_{{Z}}&\rho+d_{{V}}&0&0\\
\noalign{\medskip}0&0&{\frac {\beta\,\lambda_{{1}}\lambda_{{2}}
\theta}{d_{{T}}d_{{E}}}}&\rho+d_{{E}}&0\\ \noalign{\medskip}0&0&0&0&\rho+d_{{c}}
\end {array}
\right|=0.
\end{equation*}
From this we obtain that
\begin{equation}
\left( \rho+d_{{T}} \right)  \left( \rho+d_{{E}} \right)  \left( \rho
+d_{{c}} \right) \left[{\rho}^{2}+ \left( d_{{V}}+d_{{I}} \right) \rho
+d_{{E}}d_{{T}}d_{{V}}d_{{I}}-\beta\,\lambda_{{1}}
\lambda_{{2}}{\it md}_{{I}}\right]=0.
\end{equation}
Hence, using the Routh--Hurwitz criterion, we conclude that
\(E_{0}\) is stable if and only if
\[
d_{{V}} {d_{{E}}}
d_{{T}}  >m\beta\,\lambda_{{1}}\lambda_{{2}},
\]
which is equivalent to \(R_0<1\).
\end{proof}

\begin{theorem}
The CTL-free equilibrium, \(E_1(\tilde{T},\tilde{I},\tilde{V},\tilde{E},0)\),
is asymptotically stable if and only if the following conditions
are satisfied:
\[
\begin{aligned}
&\alpha_1>0,\\
&\alpha_2>0,\\
&\alpha_3>0,\\
&\alpha_4>0,\\
&\alpha_1\,\alpha_2-\alpha_3>0,\\
&(\alpha_1\,\alpha_2-\alpha_3)\alpha_3-\alpha_1^2\alpha_4>0,
\end{aligned}
\]
where
\begin{equation}
\label{alphas:i:1:4}
\begin{split}
\alpha_{{1}}=&-a_{{33}}-a_{{44}}-a_{{11}}-a_{{22}},\\
\alpha_{{2}}=& \left( a_{{33}}+a_{{44}}+a_{{22}} \right) a_{{11}}
-a_{{23}}a_{{32}}+ \left( a_{{33}}+a_{{22}} \right) a_{{44}}
+a_{{33}}a_{{22}}-a_{{41}}a_{{14}},\\
\alpha_{{3}}=& \left( a_{{23}}a_{{32}}+ \left( -a_{{33}}
-a_{{22}}\right) a_{{44}}-a_{{33}}a_{{22}} \right) a_{{11}}\\
&{}+ \left( -a_{{13}}a_{{21}}+a_{{23}}a_{{44}}
-a_{{24}}a_{{43}} \right) a_{{32}}-a_{{33}}a_{{22}}a_{{44}}
+a_{{41}}a_{{14}} \left( a_{{33}}+a_{{22}} \right),\\
\alpha_{{4}}= &\left(  \left( -a_{{23}}a_{{44}}
+a_{{24}}a_{{43}}\right) a_{{32}}+a_{{33}}a_{{22}}a_{{44}} \right) a_{{11}}\\
&{}+ \left(a_{{21}}a_{{13}}a_{{44}}+ \left( -a_{{21}}a_{{43}}
+a_{{23}}a_{{41}}\right) a_{{14}}-a_{{41}}a_{{13}}a_{{24}} \right)
a_{{32}}-a_{{33}}a_{{41}}a_{{14}}a_{{22}},
\end{split}
\end{equation}
with
\begin{equation}
\label{eq:ais}
\begin{split}
a_{{11}}&=-\beta\,EV-d_{{T}}, \quad a_{{21}}=\beta\,EV, \quad a_{{41}}=-\theta\,\beta\,EV, \\
a_{22}&=-d_{I}, \quad a_{32}=md_{I},\\
a_{13}&=-\beta E T, \quad a_{23}=\beta ET, \quad a_{33}=-d_{V}, \quad a_{43}=-\beta\theta ET,\\
a_{14}&=-\beta E T, \quad a_{24}=\beta ET, \quad a_{44}=-d_{V}, \quad a_{43}=-\beta\theta ET,\\
a_{15}&=\beta\,EVTq, \quad a_{25}=-\beta\,EVTq-pI, \quad a_{55}=-d_{c}.
\end{split}
\end{equation}
\end{theorem}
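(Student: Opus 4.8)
The plan is to linearise system~\eqref{mod1} at $E_1$, factor the characteristic polynomial, and then apply the Routh--Hurwitz criterion. \textbf{Step 1: evaluate the Jacobian at $E_1$.} First I would substitute the coordinates of $E_1$ into~\eqref{jacobian} and set $C=0$; then the $(5,1),(5,3),(5,4)$ entries vanish and so does $a_{52}=\alpha C\left(1-C/C_{\max}\right)$, so the fifth row of $J(E_1)$ becomes $(0,0,0,0,a_{55})$. Hence $J(E_1)$ is block triangular: a $4\times4$ block $A=(a_{ij})_{1\le i,j\le4}$ acting on the $(T,I,V,E)$ directions, with entries as in~\eqref{eq:ais} (these come directly from~\eqref{jacobian} at $C=0$, simplified where convenient with the equilibrium identities $\lambda_1=\beta\tilde E\tilde V\tilde T+d_T\tilde T$, $\beta\tilde E\tilde V\tilde T=d_I\tilde I$, $md_I\tilde I=d_V\tilde V$ and $\lambda_2=\theta\beta\tilde E\tilde V\tilde T+d_E\tilde E$), together with the decoupled fifth direction carrying the real eigenvalue $a_{55}$.

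\textbf{Step 2: factor the characteristic polynomial.} Expanding $\det(\rho I-J(E_1))$ along the fifth row yields
\[
\det\bigl(\rho I-J(E_1)\bigr)=(\rho-a_{55})\,\bigl(\rho^{4}+\alpha_{1}\rho^{3}+\alpha_{2}\rho^{2}+\alpha_{3}\rho+\alpha_{4}\bigr),
\]
where $\alpha_1=-\operatorname{tr}A$, $\alpha_2$ is the sum of the $2\times2$ principal minors of $A$, $\alpha_3$ is minus the sum of the $3\times3$ principal minors, and $\alpha_4=\det A$; writing these out in terms of the $a_{ij}$ should reproduce exactly the formulas~\eqref{alphas:i:1:4}. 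The linear factor supplies the real eigenvalue $a_{55}$, which is negative at $E_1$, so asymptotic stability of $E_1$ is equivalent to the quartic $P_4(\rho)=\rho^{4}+\alpha_{1}\rho^{3}+\alpha_{2}\rho^{2}+\alpha_{3}\rho+\alpha_{4}$ being Hurwitz.

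\textbf{Step 3: apply Routh--Hurwitz to the quartic.} A real monic quartic is Hurwitz if and only if all four leading principal minors of its Hurwitz matrix are positive; for degree four these are $\Delta_1=\alpha_1$, $\Delta_2=\alpha_1\alpha_2-\alpha_3$, $\Delta_3=(\alpha_1\alpha_2-\alpha_3)\alpha_3-\alpha_1^2\alpha_4$ and $\Delta_4=\alpha_4\Delta_3$, so the conditions are $\alpha_1>0$, $\alpha_1\alpha_2-\alpha_3>0$, $(\alpha_1\alpha_2-\alpha_3)\alpha_3-\alpha_1^2\alpha_4>0$ and $\alpha_4>0$. I would then note that $\alpha_3>0$ follows (from $\Delta_3>0$ and $\alpha_1^2\alpha_4>0$) and hence $\alpha_2>0$ (from $\Delta_2>0$), so that the six inequalities in the statement are jointly equivalent to $P_4$ being Hurwitz. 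Together with $a_{55}<0$ this gives the asserted ``if and only if''; the converse direction is automatic because if some $\Delta_i>0$ fails then $P_4$ has a root with non-negative real part, destroying asymptotic stability.

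\textbf{Expected main difficulty.} The conceptual steps are routine; the labour is algebraic. The delicate points are: (i) reducing the Jacobian entries at $E_1$ via the equilibrium relations so that they coincide with~\eqref{eq:ais}; (ii) carrying out the $4\times4$ characteristic-determinant expansion without sign errors so that the coefficients agree with~\eqref{alphas:i:1:4}; and (iii) keeping the equivalence sharp --- verifying that the decoupled eigenvalue $a_{55}$ is genuinely negative at a feasible $E_1$ (which uses the positivity of $\tilde T,\tilde I,\tilde V,\tilde E$ from the existence analysis) and that the boundary cases $\Delta_i=0$ indeed entail loss of stability.
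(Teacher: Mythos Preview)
Your proposal is correct and follows essentially the same route as the paper: evaluate the Jacobian \eqref{jacobian} at $E_1$ (where $C=0$ makes the fifth row $(0,0,0,0,a_{55})$), factor the characteristic polynomial as $(\rho-a_{55})$ times a monic quartic with coefficients \eqref{alphas:i:1:4}, and invoke Routh--Hurwitz on the quartic. Your treatment is in fact a little more careful than the paper's, since you explicitly argue that $\alpha_2>0$ and $\alpha_3>0$ are consequences of the remaining Hurwitz inequalities (so the six listed conditions are genuinely equivalent to the four principal-minor conditions) and you flag the need to check $a_{55}<0$ at a feasible $E_1$; the paper simply records the eigenvalue $-d_c$ and cites Routh--Hurwitz without these remarks.
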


\begin{proof}
Using \eqref{jacobian}, the Jacobian matrix of \eqref{mod1} at
\(E_1(\tilde{T},\tilde{I},\tilde{V},\tilde{E},0)\) is given by
\[
J({E_1})=\left[ \begin {array}{ccccc} a_{{11}}&0&a_{{13}}&a_{{14}}&a_{{15}}\\
\noalign{\medskip}a_{{21}}&a_{{22}}&a_{{23}}&a_{{24}}&a_{{25}}\\
\noalign{\medskip}0&a_{{32}}&a_{{33}}&0&0\\
\noalign{\medskip}a_{{41}}&0&a_{{43}}&a_{{44}}&0\\
\noalign{\medskip}0&0&0&0&a_{{55}}
\end{array} \right],
\]
with the $a_{i j}$ given by \eqref{eq:ais},
whose characteristic equation in $\rho$ is
\begin{equation}
\left( \rho-a_{{55}} \right)  \left( {\rho}^{4}+\alpha_{{1}}{\rho}^{3}
+\alpha_{{2}}{\rho}^{2}+\alpha_{{3}}\rho+\alpha_{{4}} \right) =0,
\end{equation}
with the $\alpha_{{i}}$, $i = 1, \ldots, 4$, given by \eqref{alphas:i:1:4}.
Here, one eigenvalue is \(-d_{c}\) and the rest
of the eigenvalues satisfy the following equation:
\[
{\rho}^{4}+\alpha_{{1}}{\rho}^{3}
+\alpha_{{2}}{\rho}^{2}+\alpha_{{3}}\rho+\alpha_{{4}}  =0.
\]
Hence, using the Routh--Hurwitz criterion, we arrive to the intended result.
\end{proof}

\begin{theorem}
The coexisting equilibrium \(E^*\) is asymptotically
stable if and only if the following conditions are satisfied:
\begin{equation}
\label{eq:cond:B:as:E*}
\begin{gathered}
B_{5}>0,\\
B_{1}B_{2}-B_{3}>0,\\
B_{3}(B_{1}B_{2}-B_{3})-B_{1}(B_{1}B_{4}-B_{5})>0,\\
(B_{1}B_{2}-B_{3})(B_{3}B_{4}-B_{2}B_{5})-(B_{1}B_{4}-B_{5})^2>0,
\end{gathered}
\end{equation}
where
\begin{equation}
\label{Bis:1:5}
\begin{split}
B_{{1}}&=-b_{{33}}-b_{{44}}-b_{{55}}-b_{{11}}-b_{{22}},\\
B_{{2}}&= \left( b_{{33}}+b_{{44}}+b_{{11}}+b_{{22}} \right) b_{{55}}
+ \left( b_{{44}}+b_{{11}}+b_{{22}} \right) b_{{33}}+ \left( b_{{11}}+b_{{22}}
\right) b_{{44}}\\
&\quad -b_{{22}}b_{{11}}-b_{{14}}b_{{41}}-b_{{23}}b_{{32}}+b_{{25}}b_{{52}},\\
B_{{3}}&= \left(  \left( -b_{{44}}-b_{{11}}-b_{{22}} \right) b_{{33}}+ \left( -
b_{{11}}-b_{{22}} \right) b_{{44}}+b_{{23}}b_{{32}}+b_{{14}}b_{{41}}
-b_{{22}}b_{{11}} \right) b_{{55}},\\
&\quad +\left(  \left( -b_{{11}}-b_{{22}} \right) b_{{44}}+b_{{25}}b_{{52}}
+b_{{14}}b_{{41}}-b_{{22}}b_{{11}} \right) b_{{33}}
+ \left( -b_{{22}}b_{{11}}+b_{{23}}b_{{32}}+b_{{25}}b_{{52}} \right) b_{{44}}\\
&\quad +\left( b_{{23}}b_{{32}}+b_{{25}}b_{{52}} \right) b_{{11}}+ \left(
-b_{{13}}b_{{21}}-b_{{24}}b_{{43}} \right) b_{{32}}+b_{{41}}b_{{14}}
b_{{22}}-b_{{52}}b_{{15}}b_{{21}},\\
B_{{4}}&= \biggl(  \left(  \left( b_{{11}}+b_{{22}} \right) b_{{44}}
-b_{{14}}b_{{41}}+b_{{22}}b_{{11}} \right) b_{{33}}\\
&\quad + \left( b_{{22}}b_{{11}}-b_{{23}}b_{{32}} \right)
b_{{44}}-b_{{23}}b_{{32}}b_{{11}}+ \left( b_{{13}}
b_{{21}}+b_{{24}}b_{{43}} \right) b_{{32}}-b_{{41}}b_{{14}}b_{{22}}\biggr) b_{{55}}\\
&\quad +\left(  \left( b_{{22}}b_{{11}}-b_{{25}}b_{{52}} \right) b_{{44}}
-b_{{25}}b_{{52}}b_{{11}}-b_{{41}}b_{{14}}b_{{22}}+b_{{52}}b_{{15}}b_{{21}} \right) b_{{33}}\\
&\quad + \left(  \left( -b_{{23}}b_{{32}}-b_{{25}}b_{{52}}
\right) b_{{11}}+b_{{21}} \left( b_{{13}}b_{{32}}+b_{{15}}b_{{52}}\right)\right) b_{{44}},\\
B_{5}&=	\biggl( {b_{{22}} \left( -b_{{11}}b_{{44}}+b_{{14}}b_{{41}} \right)
b_{{33}}+b_{{32}} \left(  \left( b_{{11}}b_{{23}}-b_{{13}}b_{{21}}\right)
b_{{44}}-b_{{24}}b_{{43}}b_{{11}}+ \left( b_{{13}}b_{{24}}
-b_{{14}}b_{{23}} \right) b_{{41}} \right)}\biggr) b_{{55}}\\
&\quad +\left(b_{{43}}b_{{14}}b_{{21}}\right)b_{{55}}
-b_{{52}} \left(  \left( -b_{{11}}b_{{25}}+b_{{15}}b_{{21}} \right)
b_{{44}}+b_{{41}} \left( b_{{14}}b_{{25}}-b_{{15}}b_{{24}} \right)\right) b_{{33}},
\end{split}
\end{equation}
with the coefficients $b_{{i j}}$ given by
\begin{equation}
\label{eq:bis}
\begin{aligned}
b_{{11}}=&-{\frac {EV\beta}{Cq+1}}-d_{{T}},
\quad b_{{21}}={\frac {EV\beta}{Cq+1}},
\quad b_{{41}}=-\theta\,\beta\,EV, \\
b_{22}=&-Cp-d_{{Z}},
\quad b_{32}=md_{I},
\quad b_{52}=\alpha\,C \left( 1-{\frac {C}{C_{{\max}}}} \right),\\
b_{13}=&-{\frac {\beta\,ET}{Cq+1}},
\quad b_{23}={\frac {\beta\,ET}{Cq+1}},
\quad b_{33}=-d_{V},
\quad b_{43}=-\beta\theta ET,\\
b_{14}=&-{\frac {\beta\,VT}{Cq+1}},
\quad b_{24}={\frac {\beta\,VT}{Cq+1}},
\quad b_{44}=-TV\beta\,\theta-d_{{E}}, \\
b_{15}=&{\frac {\beta\,EVTq}{ \left( Cq+1 \right) ^{2}}},
\quad b_{25} =-{\frac {\beta\,EVTq}{ \left( Cq+1 \right) ^{2}}}-pZ,
\quad b_{55} =-{\frac {\alpha\,Z \left( -C_{{\max}}+2\,C \right) }{C_{{\max}}}}-d_{c}.
\end{aligned}
\end{equation}
\end{theorem}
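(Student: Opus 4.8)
The plan is to follow the same linearization-plus-Routh--Hurwitz scheme already used for $E_0$ and $E_1$, the only structural difference being that at the coexisting equilibrium the last row and column of the Jacobian no longer split off: since $I^*>0$ and $0<C^*<C_{\max}$ we have $b_{52}=\alpha C^*(1-C^*/C_{\max})\neq 0$ and $b_{25}\neq 0$, so the characteristic polynomial is a genuine quintic rather than a linear factor times a quartic.

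First I would substitute the coordinates of $E^*$ recorded in Section~\ref{sec3} into the general Jacobian \eqref{jacobian}. A direct evaluation shows that $J(E^*)$ has exactly the nonzero entries $b_{ij}$ listed in \eqref{eq:bis}, i.e.
\[
J(E^*)=\begin{bmatrix}
b_{11}&0&b_{13}&b_{14}&b_{15}\\
b_{21}&b_{22}&b_{23}&b_{24}&b_{25}\\
0&b_{32}&b_{33}&0&0\\
b_{41}&0&b_{43}&b_{44}&0\\
0&b_{52}&0&0&b_{55}
\end{bmatrix}.
\]
Then I would form the characteristic equation $\det(\rho I-J(E^*))=0$ and expand it. Because rows~3 and~5 each contain only two nonzero entries, it is cleanest to expand first along row~5 (contributions from $b_{55}$ and $b_{52}$) and then along the sparse rows of the resulting $4\times4$ minors; this produces a monic degree-five polynomial
\[
P(\rho)=\rho^{5}+B_{1}\rho^{4}+B_{2}\rho^{3}+B_{3}\rho^{2}+B_{4}\rho+B_{5},
\]
and a careful bookkeeping of the cofactor signs identifies $B_{1},\dots,B_{5}$ with the grouped expressions in \eqref{Bis:1:5}; as sanity checks, $B_{1}=-(b_{11}+b_{22}+b_{33}+b_{44}+b_{55})=-\operatorname{tr}J(E^*)$ and $B_{5}=-\det J(E^*)$, matching the first and last lines of \eqref{Bis:1:5}.

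Finally I would apply the Routh--Hurwitz criterion to $P$: since asymptotic stability of $E^*$ amounts to all eigenvalues of $J(E^*)$ having negative real part, and since for a monic quintic this holds precisely when $B_{1}>0$, $B_{5}>0$ together with the positivity of the three inner Hurwitz minors $B_{1}B_{2}-B_{3}$, $B_{3}(B_{1}B_{2}-B_{3})-B_{1}(B_{1}B_{4}-B_{5})$, and $(B_{1}B_{2}-B_{3})(B_{3}B_{4}-B_{2}B_{5})-(B_{1}B_{4}-B_{5})^{2}$, one arrives at the conditions \eqref{eq:cond:B:as:E*}. The main obstacle is purely computational: keeping the signs and groupings straight while expanding the $5\times5$ determinant so that the coefficients emerge exactly in the form written in \eqref{Bis:1:5}, and, if one wants the positivity inequalities to be more than formal, using the equilibrium relations for $T^*,V^*,E^*,C^*$ from Section~\ref{sec3} to control the signs of the individual $b_{ij}$. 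A computer algebra system is the natural tool for the coefficient extraction, as was evidently used to produce \eqref{Bis:1:5}.
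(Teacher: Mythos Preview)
Your proposal is correct and follows essentially the same approach as the paper: compute the Jacobian $J(E^*)$, expand $\det(\rho I-J(E^*))$ to obtain the quintic \eqref{char}, and then invoke the Routh--Hurwitz criterion to arrive at \eqref{eq:cond:B:as:E*}. The paper's proof is in fact terser than yours---it simply writes down $J(E^*)$, states the characteristic polynomial, and cites Routh--Hurwitz---so your added remarks on expanding along row~5, the trace/determinant sanity checks, and the need for a CAS are consistent elaborations rather than a different route.
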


\begin{proof}
Using \eqref{jacobian}, the Jacobian matrix of \eqref{mod1}
at the coexistence equilibrium point \(E^*\) is given as
\[
J({E^*})=\left[ \begin {array}{ccccc} b_{{11}}&0&b_{{13}}&b_{{14}}&b_{{15}}\\
\noalign{\medskip}b_{{21}}&b_{{22}}&b_{{23}}&b_{{24}}&b_{{25}}\\
\noalign{\medskip}0&b_{{32}}&b_{{33}}&0&0\\
\noalign{\medskip}b_{{41}}&0&b_{{43}}&b_{{44}}&0\\
\noalign{\medskip}0&b_{{52}}&0&0&b_{{55}}
\end{array}
\right],
\]
with the $b_{i j}$ given in \eqref{eq:bis},
whose characteristic equation in $\rho$ is given by
\begin{equation}
\label{char}
{\rho}^{5}+ B_{1}{\rho}^{4}+B_{{2}}{\rho}^{3}
+B_{{3}}{\rho}^{2}+B_{{4}}\rho+B_{{5}} =0
\end{equation}
with $B_{{i}}$, $i = 1, \ldots, 5$, given by \eqref{Bis:1:5}.
Hence, using the Routh--Hurwitz criterion, all roots of the characteristic
equation \eqref{char} have negative real parts,
provided the conditions \eqref{eq:cond:B:as:E*} hold.
\end{proof}


\section{Analysis of the model with delays}
\label{sec4}

Let \(\psi\) denote the  Banach space of continuous functions
\(\psi:[-\tau,0]\rightarrow\mathbb R^{5}\) equipped with the sup-norm
\[
\parallel\psi\parallel
=\sup_{-\tau\leq\theta\leq0}\left\{\mid\psi_{1}(\theta)\mid,\mid\psi_{2}(\theta)\mid,
\mid\psi_{3}(\theta)\mid,\mid\psi_{4}(\theta)\mid,\mid\psi_{5}(\theta)\mid\right\}.
\]
For biological reasons, populations always have non-negative
values. Therefore, the initial functions for our model \eqref{mod2}
are given as below \cite{Abraha}:
\begin{equation}
\begin{cases}
\label{eq:ic}
T(\theta)=\psi_{1}(\theta), \quad I(\theta)
=\psi_{2}(\theta), \quad V(\theta)
=\psi_{3}(\theta), \quad E(\theta)=\psi_{4}(\theta), \quad C(\theta)=\psi_{5}(\theta),\\
\psi_{i}(\theta)\geq0,\quad \theta\in[-\tau,0],\quad \psi_{i}(0)>0,
\quad \psi=(\psi_{1},\psi_{2},\psi_{3},\psi_{4},\psi_{5}) \in \mathcal{C}([-\tau,0],\mathbb{R}^{5}_{+}),	
\end{cases}
\end{equation}
where $\tau=\max\{\tau_1,\tau_2\}$ and
\(
\begin{aligned}
\mathbb {R}_{+}^{5}
=\left\{ \left( x_{1},x_{2},x_{3},x_{4},x_{5}\right)
\mid x_{i}\ge 0, i=1,2,3,4,5\right\}.
\end{aligned}
\)
The system \eqref{mod2} exhibits three equilibrium points:
(i) the disease-free equilibrium,
\(E_0\left(\frac{\lambda_{1}}{d_{T}},0,0,\frac{\lambda_{2}}{d_{E}},0\right)\),
(ii) the CTL response--free equilibrium \(E_1(\tilde{T},\tilde{I},\tilde{V},\tilde{E},0)\),
where
\[
\begin{aligned}
\tilde{T}&={\frac {d_{{V}}}{E\beta\,m \left( \epsilon _{{1}}\epsilon _{{2}}
-\epsilon _{{1}}-\epsilon _{{2}}+1 \right) }},\\
\tilde{I}&={\frac {Em\beta\, \left( \epsilon _{{2}}-1 \right)
\left( \epsilon_{{1}}-1 \right) \lambda_{{1}} \left(
1-\epsilon _{{1}} \right) -d_{{T}}d_{{V}}}{E\beta\,md_{{Z}}
\left( \epsilon _{{2}}-1 \right) }},\\
\tilde{V}&={\frac {Em\beta\, \left( \epsilon_{{2}}
-1 \right)  \left( \epsilon _{{1}}-1 \right)
\lambda_{{1}} \left( 1-\epsilon _{{1}} \right)-d_{{T}}
d_{{V}}}{E\beta\,d_{{V}}}},
\end{aligned}
\]
and \(\tilde{E}\) is the positive solution of
\(
\lambda_{{2}}-{\frac {\theta\, \left( Em\beta\, \left( \epsilon _{{2}}
-1 \right)  \left( \epsilon _{{1}}-1 \right) \lambda_{{1}} \left(
1-\epsilon _{{1}} \right) -d_{{T}}d_{{V}} \right) }{Em\beta\,
\left(\epsilon _{{2}}\epsilon _{{1}}-\epsilon _{{1}}-\epsilon _{{2}}
+1\right) }}-d_{{E}}E =0,
\)
and (iii) the endemic equilibrium \(E^*\), which
is the positive solution of the nonlinear system
\[
\begin{cases}
\begin{aligned}
&\lambda_1 - (1-\epsilon_1)\frac{\beta E V T}{1+q C}  - d_T T=0,\\
&(1-\epsilon_1)\frac{e^{-d_I\tau_1}\beta E(t-\tau_1) V(t-\tau_1)
T(t-\tau_1)}{1+q C(t-\tau_1)} - d_I I - p I C=0,\\
& (1-\epsilon_2)m d_I I - d_V V=0,\\
& \lambda_2 - \theta \beta E V T - d_E E=0,\\
&\alpha I(t-\tau_2) C(t-\tau_2) \left(1 - \frac{C(t-\tau_2)}{C_{\max}}\right)
- d_c C=0.
\end{aligned}
\end{cases}
\]
If we linearize  our system \eqref{mod2} about
a steady state $E(T,I,V,E,C)$, we obtain that
\begin{equation}
\label{2}
\frac{dX}{dt}=F X(t)+GX(t-\tau_1)+HX(t-\tau_2).
\end{equation}
Here, $F$, $G$ and $H$ are $5\times5$ matrices given as
\[
\begin{aligned}
&F=[F_{ij}]=\left[ \begin {array}{ccccc} -{\frac {\beta\,EV}{qC+1}}-d_{{T}}&0
&-{\frac {\beta\,ET}{qC+1}}&-{\frac {\beta\,VT}{qC+1}}
&{\frac {\beta\,EVTq}{ \left( qC+1 \right) ^{2}}}\\
\noalign{\medskip}{\frac { \left( 1-\epsilon_{{1}} \right)
\beta\,EV}{qC+1}}&-pC-d_{{I}}&{\frac { \left(
1-\epsilon _{{1}} \right) \beta\,ET}{qC+1}}
&{\frac { \left( 1-\epsilon _{{1}} \right) \beta\,VT}{qC+1}}
&-{\frac { \left( 1-\epsilon_{{1}} \right) \beta\,qEVT}{
\left( qC+1 \right) ^{2}}}-pI\\
\noalign{\medskip}0& \left( 1-\epsilon _{{2}} \right) md_{{I}}&
-d_{{V}}&0&0\\ \noalign{\medskip}-\theta\,\beta\,EV&0&-\beta\,ET\theta
&-\beta\,VT\theta-d_{{E}}&0\\ \noalign{\medskip}0&0 &0&0&-d_{{c}}
\end{array}
\right],
\end{aligned}
\]
\[
\begin{aligned}
&G=[G_{ij}]=\left[ \begin{array}{ccccc} 0&0&0&0&0\\ \noalign{\medskip}{
\frac{\left(1-\epsilon _{{1}} \right)e^{-d_{I}\tau_{1}} \beta\,EV}{1+qC}}
&0&{\frac{\left(1-\epsilon _{{1}} \right)e^{-d_{I}\tau_{1}} \beta\,ET}{1+qC}}
&{\frac{\left(1-\epsilon _{{1}} \right)e^{-d_{I}\tau_{1}} \beta\,VT}{1+qC}}
&-{\frac { \left( 1-\epsilon_{{1}} \right)e^{-d_{I}\tau_{1}} \beta\,EVT}{
\left( qC+1 \right) ^{2}}}\\
\noalign{\medskip}0& 0&0&0&0\\
\noalign{\medskip}0&0&0&0&0\\
\noalign{\medskip}0&0 &0&0&0
\end{array}\right],\\
&H=[H_{ij}]=\left[
\begin{array}{ccccc}
0 & 0 & 0 & 0&0\\
0 & 0 & 0 & 0&0 \\
0 & 0 & 0 & 0&0 \\
0 & 0 & 0 & 0&0 \\
0 & \alpha C(1-\frac{C}{C_{\max}}) & 0 & 0&\alpha I(1-\frac{2C}{C_{\max}})
\end{array}\right].
\end{aligned}
\]
The characteristic equation of the delayed system \eqref{mod2} then becomes
\begin{equation*}
\triangle(\xi,\tau_1,\tau_2) =\mid {\xi}{I}-F-e^{-{\xi}{\tau_1}}G-e^{-{\xi}{\tau_2}}H\mid=0.
\end{equation*}
This yields the characteristic equation
\begin{multline}
\label{Char_eqn2}
0 = \xi^5+L_1\xi^4+L_2\xi^3+L_3\xi^2+L_4\xi+L_5
+e^{-\xi\tau_1}(\xi^4+M_2\xi^3+M_3\xi^2+M_4\xi+M_5)\\
+e^{-\xi\tau_2}(\xi^5+K_1\xi^4+K_2\xi^3+K_3\xi^2+K_4\xi+K_5)
+e^{-\xi(\tau_1+\tau_2)}(\xi^4+Q_2\xi^3+Q_3\xi^2+Q_4\xi+Q_5),
\end{multline}
where the coefficients are given by
\begin{eqnarray*}
L_1&=& - F_{11} - F_{22} - F_{33} - F_{44} - F_{55},\\
L_2&=&+ F_{11} F_{22} - F_{23} F_{32} + F_{11} F_{33} + F_{22} F_{33} - F_{14} F_{41} + F_{11} F_{44}\\
&&+F_{22} F_{44} + F_{33} F_{44} + F_{11} F_{55} + F_{22} F_{55} + F_{33} F_{55} + F_{44} F_{55},\\
L_3&=&- F_{13} F_{21} F_{32} + F_{11} F_{23} F_{32} - F_{11} F_{22} F_{33} + F_{14} F_{22} F_{41}\\
&&+F_{14} F_{33} F_{41} - F_{24} F_{32} F_{43} - F_{11} F_{22} F_{44}\\
&&+F_{23} F_{32} F_{44} - F_{11} F_{33} F_{44} - F_{22} F_{33} F_{44}\\
&&-F_{11} F_{22} F_{55} + F_{23} F_{32} F_{55} - F_{11} F_{33} F_{55}\\
&& -F_{22} F_{33} F_{55} + F_{14} F_{41} F_{55} - F_{11} F_{44} F_{55}\\
&&-F_{22} F_{44} F_{55} - F_{33} F_{44} F_{55},\\
L_4&=&+F_{14} F_{23} F_{32} F_{41} - F_{13} F_{24} F_{32} F_{41} - F_{14} F_{22} F_{33} F_{41}\\
&& -F_{14} F_{21} F_{32} F_{43} + F_{11} F_{24} F_{32} F_{43} + F_{13} F_{21} F_{32} F_{44}\\
&& -F_{11} F_{23} F_{32} F_{44} + F_{11} F_{22} F_{33} F_{44} + F_{13} F_{21} F_{32} F_{55}\\
&& -F_{11} F_{23} F_{32} F_{55} + F_{11} F_{22} F_{33} F_{55} - F_{14} F_{22} F_{41} F_{55}\\
&& -F_{14} F_{33} F_{41} F_{55} + F_{24} F_{32} F_{43} F_{55} + F_{11} F_{22} F_{44} F_{55}\\
&& -F_{23} F_{32} F_{44} F_{55} + F_{11} F_{33} F_{44} F_{55} + F_{22} F_{33} F_{44} F_{55},\\
L_5&=&-F_{14} F_{23} F_{32} F_{41} F_{55} + F_{13} F_{24} F_{32} F_{41} F_{55} + F_{14} F_{22} F_{33} F_{41} F_{55}\\
&& +F_{14} F_{21} F_{32} F_{43} F_{55} - F_{11} F_{24} F_{32} F_{43} F_{55} - F_{13} F_{21} F_{32} F_{44} F_{55}\\
&&+F_{11} F_{23} F_{32} F_{44} F_{55} - F_{11} F_{22} F_{33} F_{44} F_{55},\\
M_2&=&- F_{32} G_{23} ,\\
M_3&=&- F_{13} F_{32} G_{21} + F_{11} F_{32} G_{23} + F_{32} F_{44} G_{23} + F_{32} F_{55} G_{23} - F_{32} F_{43} G_{24},\\
M_4&=&-F_{14} F_{32} F_{43} G_{21} + F_{13} F_{32} F_{44} G_{21} + F_{13} F_{32} F_{55} G_{21}\\
&& +F_{14} F_{32} F_{41} G_{23} - F_{11} F_{32} F_{44} G_{23} - F_{11} F_{32} F_{55} G_{23}\\
&& -F_{32} F_{44} F_{55} G_{23} - F_{13} F_{32} F_{41} G_{24} + F_{11} F_{32} F_{43} G_{24} + F_{32} F_{43} F_{55} G_{24},\\
M_5&=& + F_{14} F_{32} F_{43} F_{55} G_{21} - F_{13} F_{32} F_{44} F_{55} G_{21} - F_{14} F_{32} F_{41} F_{55} G_{23}\\
&&+ F_{11} F_{32} F_{44} F_{55} G_{23} + F_{13} F_{32} F_{41} F_{55} G_{24} - F_{11} F_{32} F_{43} F_{55} G_{24},
\end{eqnarray*}
\begin{eqnarray*}
K_1&=& - H_{55},\\
K_2&=&- F_{25} H_{52}+ F_{11} H_{55} + F_{22} H_{55} + F_{33} H_{55} + F_{44} H_{55} ,\\
K_3&=&- F_{15} F_{21} H_{52} + F_{11} F_{25} H_{52} + F_{25} F_{33} H_{52} +F_{25} F_{44} H_{52}\\
&&-F_{11} F_{22} H_{55} + F_{23} F_{32} H_{55} - F_{11} F_{33} H_{55}\\
&& -F_{22} F_{33} H_{55} + F_{14} F_{41} H_{55} - F_{11} F_{44} H_{55}\\
&& -F_{22} F_{44} H_{55} - F_{33} F_{44} H_{55} ,\\
K_4&=&+ F_{15} F_{21} F_{33} H_{52} - F_{11} F_{25} F_{33} H_{52}\\
&& -F_{15} F_{24} F_{41} H_{52} + F_{14} F_{25} F_{41} H_{52} + F_{15} F_{21} F_{44} H_{52}\\
&& - F_{11} F_{25} F_{44} H_{52} - F_{25} F_{33} F_{44} H_{52}+F_{13} F_{21} F_{32} H_{55} - F_{11} F_{23} F_{32} H_{55}\\
&& + F_{11} F_{22} F_{33} H_{55}-F_{14} F_{22} F_{41} H_{55} - F_{14} F_{33} F_{41} H_{55}
+ F_{24} F_{32} F_{43} H_{55}\\&&
+ F_{11} F_{22} F_{44} H_{55} - F_{23} F_{32} F_{44} H_{55} + F_{11} F_{33} F_{44} H_{55} + F_{22} F_{33} F_{44} H_{55},\\
K_5&=& F_{15} F_{24} F_{33} F_{41} H_{52} - F_{14} F_{25} F_{33} F_{41} H_{52} - F_{15} F_{21} F_{33} F_{44} H_{52}
+ F_{11} F_{25} F_{33} F_{44} H_{52}\\
&&- F_{14} F_{23} F_{32} F_{41} H_{55} + F_{13} F_{24} F_{32} F_{41} H_{55}\\
&& + F_{14} F_{22} F_{33} F_{41} H_{55} + F_{14} F_{21} F_{32} F_{43} H_{55} - F_{11} F_{24} F_{32} F_{43} H_{55}\\
&& - F_{13} F_{21} F_{32} F_{44} H_{55} + F_{11} F_{23} F_{32} F_{44} H_{55} - F_{11} F_{22} F_{33} F_{44} H_{55},
\end{eqnarray*}
and
\begin{eqnarray*}
Q_2&=&  - G_{25} H_{52},\\
Q_3&=&- F_{15} G_{21} H_{52} + F_{11} G_{25} H_{52} + F_{33} G_{25} H_{52} + F_{44} G_{25} H_{52}
+ F_{32} G_{23} H_{55},\\
Q_4&=&+ F_{15} F_{33} G_{21} H_{52} + F_{15} F_{44} G_{21} H_{52} - F_{15} F_{41} G_{24} H_{52}
- F_{11} F_{33} G_{25} H_{52} \\
&& + F_{14} F_{41} G_{25} H_{52} - F_{11} F_{44} G_{25} H_{52} - F_{33} F_{44} G_{25} H_{52}
+F_{13} F_{32} G_{21} H_{55} \\
&&- F_{11} F_{32} G_{23} H_{55} -F_{32} F_{44} G_{23} H_{55} + F_{32} F_{43} G_{24} H_{55}, \\
Q_5&=& - F_{15} F_{33} F_{44} G_{21} H_{52} + F_{15} F_{33} F_{41} G_{24} H_{52}
+F_{11} F_{33} F_{44} G_{25} H_{52}\\
&& + F_{14} F_{32} F_{43} G_{21} H_{55} - F_{13} F_{32} F_{44} G_{21} H_{55} - F_{14} F_{32} F_{41} G_{23} H_{55}\\
&& + F_{11} F_{32} F_{44} G_{23} H_{55} + F_{13} F_{32} F_{41} G_{24} H_{55} - F_{11} F_{32} F_{43} G_{24} H_{55}
- F_{14} F_{33} F_{41} G_{25} H_{52}.
\end{eqnarray*}


\subsection{Stability and Hopf bifurcation of the delayed system}

There are two cases to be investigated.


\subsubsection{Case I: $\tau_1=0$ and $\tau_2>0$}

For $\tau_1=0$ and $\tau_2>0$, the characteristic equation is
\begin{equation}
\label{char_delay3}
\psi(\xi,\tau_2)=\xi^5+m_1\xi^4+m_2\xi^3+m_3\xi^2+m_4\xi
+m_5e^{-\xi\tau_2}(n_1\xi^4+n_2\xi^3+n_3\xi^2+n_4\xi+n_5)=0,
\end{equation}
where
\[
m_1=L_1,~~m_2=L_2+M_2, ~~m_3=L_3+M_3,~~m_4=L_4+M_4,~~m_5=L_5+M_5,
\]
\[
n_1=K_1,~~n_2=K_2+Q_2, ~~n_3=K_3+Q_3,~~n_4=K_4+Q_4,~~n_5=K_5+Q_5.
\]
For Hopf bifurcation to occur, we need to find a purely imaginary root
of \eqref{char_delay3}. Let us assume $\xi = i\omega$, $\omega > 0$,
to be a purely imaginary root of (\ref{char_delay3}). Putting $\xi = i\omega$
in \eqref{char_delay3}, and separating real and imaginary parts, we get
\begin{equation}
\label{eqqn5}
\begin{gathered}
m_1\omega^4-m_3\omega^2+m_5
=-(n_1\omega^4-n_3\omega^2+n_5)\cos\omega\tau_2 -(-n_2\omega^3+n_4\omega)\sin \omega \tau_2,\\
\omega^5 - m_2\omega^3 + m_4\omega=(n_1\omega^4 - n_3\omega^2 + n_5)
\sin\omega\tau_2 - (-n_2\omega^3 + n_4\omega) \cos \omega\tau_2.
\end{gathered}
\end{equation}
Squaring and adding the two equations of (\ref{eqqn5}), we obtain that
\begin{eqnarray}
\label{eq_etai}
\omega^{10} + p_1\omega^8 + p_2\omega^6 + p_3\omega^4 + p_4\omega^2 + p_5 = 0,
\end{eqnarray}
where
\begin{eqnarray*}
p_1&=& m_1^2-2m_2-n_1^2,~~~p_2=m_2^2+2m_4-2m_1m_3-n_2^2+2n_1n_3,\\
p_3&=&m_3^2+2m_1m_5-2m_4m_2-2n_1n_5+2n_4n_2-n_3^2,\\
p_4&=&-2m_3m_5+m_4^2+2n_3n_5-n_4^2,~~~p_5=m_5^2-n_5^2.
\end{eqnarray*}
Letting $\xi^2=l$, the equation (\ref{eq_etai}) becomes
\begin{equation}
\label{eq19}
H(l) = l^{5} + p_1l^{4} + p_2l^{3} + p_3l^{2} + p_4l + p_5 = 0.
\end{equation}
Now, for later use, we define:
\begin{equation}
\label{20}
\begin{split}
a_1&=-\frac{6}{25}p_1^2+\frac{3}{5} p_2, \quad b_1
= -\frac{6}{25}p_1p_2+\frac{2}{5}p_3+\frac{8}{125}p_1^3,\\
c_1&=-\frac{3}{625}p_1^4+\frac{3}{125}p_1^2p_2
-\frac{2}{25}p_1p_3+\frac{1}{5}p_4, ~~~\Delta_0={a_1}^2-4c_1,\\
a_2&=-\frac{1}{3}a_1-4c_1,\quad b_2=-\frac{2}{27}{a_1}^3+\frac{8}{3}a_1c_1-{b_1}^2,\\
\Delta_1 &= a_1-4c_1,\quad d_0= \sqrt[3]{-\frac{b_2}{2}+\sqrt\Delta_1}
+ \sqrt[3]{-\frac{b_2}{2}-\sqrt\Delta_1}+\frac{a_1}{3},\\
\Delta_2&=-d_0-a_1+\frac{2b_1}{\sqrt{d_0-a_1}},
\quad \Delta_3 =-d_0-a_1-\frac{2b_1}{\sqrt{d_0-a_1}}.
\end{split}
\end{equation}
Applying the results on the distribution of roots for an equation
of degree five, as derived in \cite{Ref_delay5}, we get Lemma~\ref{lem:d5}.

\begin{lemma}
\label{lem:d5}
For the polynomial equation (\ref{eq19}), the following results hold:\\

\noindent (a) If $p_5 < 0$, then equation (\ref{eq19}) will have at least one positive root.\\

\noindent (b) Suppose $p_5 \geq 0$ and $b_1=0$.
\begin{itemize}
\item[(i)] If $\Delta_0 < 0$, then equation (\ref{eq19}) has no positive real root.
\item[(ii)] If $\Delta_0 \geq0$ and $a_1\geq 0$ and $c_1>0,$ then (\ref{eq19}) has no positive real root.
\item[(iii)] If (i) and (ii) are not satisfied, then (\ref{eq19}) has a positive real root
if and only if there exists at least one positive $l\in\{ l_1,l_2,l_3,l_4\}$ such that $H(l)\leq0$, where
\[
l_i=\delta_i-\frac{p_1}{5}, ~i=1,2,3,4,
\]
and
\[
\delta_1= \sqrt{\frac{-a_1+\sqrt{\Delta_0}}{2}},
\quad  \delta_2= -\sqrt{\frac{-a_1+\sqrt{\Delta_0}}{2}},
\]
\[
\delta_3= \sqrt{\frac{-a_1-\sqrt{\Delta_0}}{2}},
\quad  \delta_4= -\sqrt{\frac{-a_1-\sqrt{\Delta_0}}{2}}.
\]
\end{itemize}
\noindent (c) Suppose that $p_5 \geq 0$, $b_1\neq 0$ and $d_0>b_1$.
\begin{itemize}
\item[(i)] If $\Delta_2 < 0$ and $\Delta_3 < 0$, then (\ref{eq19}) has no positive real root.
\item[(ii)] If (i) is not satisfied, then (\ref{eq19}) has a positive real root if and only if there
exists at least one positive $l\in\{ l_1,l_2,l_3,l_4\}$ such that $H(l)\leq0$, where
\[
l_i=\delta_i-\frac{p_1}{5}, ~i=1,2,3,4,
\]
\end{itemize}
and
\[
\delta_1= \frac{-\sqrt{d_0-a_1}+\sqrt{\Delta_2}}{2},
\quad  \delta_2=\frac{-\sqrt{d_0-a_1}-\sqrt{\Delta_2}}{2},
\]
\[
\delta_3= \frac{-\sqrt{d_0-a_1}+\sqrt{\Delta_3}}{2},
\quad  \delta_4=\frac{-\sqrt{d_0-a_1}-\sqrt{\Delta_3}}{2}.
\]

\noindent (d) Assume that $\Delta_0 \geq0$ and $b_1\neq0$.
Then, (\ref{eq19}) has a positive real root if and only if
$$
\frac{b_1}{4(a_1-d_0)}+\frac{1}{2}d_0=0,
\quad \bar{l}>0,
\quad H(\bar{l})\leq0,
$$
where $\bar{l}=b_1/2(a_1-d_0)-\frac{1}{5}p_1$.
\end{lemma}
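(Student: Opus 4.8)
The plan is to convert the existence of a positive root of the quintic $H$ into a sign analysis of $H$ on the half-line $[0,\infty)$, and then to locate the relevant critical points of $H$ by solving the quartic $H'=0$ with Ferrari's method. The starting observations are that $H(l)\to+\infty$ as $l\to+\infty$ and $H(0)=p_5$. Part (a) is then immediate from the intermediate value theorem: if $p_5<0$ then $H(0)<0<H(l)$ for large $l$, so $H$ has a root in $(0,\infty)$. For (b)--(d) we are in the regime $p_5\ge0$, and here $H$ has a positive root if and only if $H(l)\le0$ for some $l>0$. Since $H(0)=p_5\ge0$ and $H(+\infty)=+\infty$, such an $l$ exists precisely when $H$ attains a nonpositive value at an interior point, i.e. at a positive root of $H'(l)=5l^4+4p_1l^3+3p_2l^2+2p_3l+p_4$; conversely, if a positive critical point $l_i$ satisfies $H(l_i)\le0$, then comparing with $H(0)\ge0$ produces a root in $(0,l_i]$. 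Thus everything reduces to: find the real roots of $H'=0$ and test $H$ at the positive ones. If $H'$ has no positive real root, then (as $H'(+\infty)=+\infty$) $H'>0$ on $[0,\infty)$, so $H$ is strictly increasing there and has no positive root.

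Next I would depress the quartic $H'(l)/5=0$ by the shift $l=\delta-p_1/5$, which removes the cubic term and yields $\delta^4+a_1\delta^2+b_1\delta+c_1=0$ with $a_1,b_1,c_1$ exactly as in (20); hence the critical points of $H$ are $l_i=\delta_i-p_1/5$ for the real roots $\delta_i$ of this depressed quartic. When $b_1=0$ the quartic is biquadratic, $\delta^2=\tfrac12(-a_1\pm\sqrt{\Delta_0})$. If $\Delta_0<0$ there is no real $\delta$, so $H$ is strictly increasing on $[0,\infty)$ and has no positive root, which is (b)(i). If $\Delta_0\ge0$, $a_1\ge0$, $c_1>0$, then necessarily $a_1>0$ and $\sqrt{\Delta_0}<a_1$, so both values of $\delta^2$ are negative, $H$ is again strictly increasing, and there is no positive root, which is (b)(ii). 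In the remaining biquadratic case at least one value of $\delta^2$ is nonnegative, the real $\delta_i$ are exactly those displayed, and the criterion is precisely ``$H(l_i)\le0$ for some positive $l_i$,'' which is (b)(iii).

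For $b_1\ne0$ I would apply Ferrari's method: the depressed quartic has a resolvent cubic whose own depression $z^3+a_2z+b_2=0$ has the real root recorded, via Cardano, as $d_0$ in (20); with $d_0$ in hand the quartic factors into two quadratics of the form $\delta^2\pm\sqrt{d_0-a_1}\,\delta+(\cdot)=0$, whose discriminants are $\Delta_2$ and $\Delta_3$, the hypothesis listed in (c) being what guarantees that $\sqrt{d_0-a_1}$ is real. If $\Delta_2<0$ and $\Delta_3<0$ neither quadratic has a real root, so $H$ is strictly increasing on $[0,\infty)$ and has no positive root, giving (c)(i); otherwise the real $\delta_i$ are among the four displayed expressions and the same test $H(l_i)\le0$ for some positive $l_i$ settles the question, giving (c)(ii). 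Part (d) is the borderline configuration $\Delta_0\ge0$, $b_1\ne0$ in which the factorization degenerates so that $H'$ has, up to multiplicity, the single relevant real root $\bar l=b_1/2(a_1-d_0)-p_1/5$; the displayed scalar equation is exactly the consistency condition forcing this degenerate factorization, after which a positive root of $H$ exists iff $\bar l>0$ and $H(\bar l)\le0$.

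The main obstacle is purely the algebraic bookkeeping of Ferrari's procedure: checking that $a_1,b_1,c_1,a_2,b_2,d_0,\Delta_0,\Delta_1,\Delta_2,\Delta_3$ in (20) are indeed the depressed-quartic coefficients, the resolvent-cubic data, and the sub-discriminants, and matching each sign condition in (b)--(d) to the dichotomy ``$H$ has no positive critical point'' versus ``the listed $l_i$ exhaust the positive critical points of $H$.'' This casework is precisely the classification of roots of a general real quintic carried out in \cite{Ref_delay5}, so it suffices to invoke that result; the only model-specific point is that the coefficients $p_1,\dots,p_5$ are real numbers (clear from their definitions), so the cited classification applies verbatim.
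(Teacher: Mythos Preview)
Your proposal is correct and, in fact, considerably more detailed than what the paper does: the paper offers no proof at all beyond the single sentence ``Applying the results on the distribution of roots for an equation of degree five, as derived in \cite{Ref_delay5}, we get Lemma~\ref{lem:d5}.'' Since you, too, conclude by invoking \cite{Ref_delay5} after explaining the underlying intermediate-value/critical-point reduction and the Ferrari bookkeeping that motivates the definitions in \eqref{20}, your approach is the same in spirit but strictly more informative.
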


Without loss of generality, we have assumed that equation (\ref{eq19})
has $r$ positive roots with $r\in \{1, 2, 3, 4, 5\}$, denoted by
$l_k, k = 1, 2, \ldots r$. Then, equation (\ref{eq_etai}) has $r$ positive roots,
say $\omega_k =\sqrt{l_k}$, $k = 1, 2, \ldots, r$. Therefore, $i\pm\omega_k$
is a pair of purely imaginary roots of \eqref{char_delay3}. Now,
using equation \eqref{eqqn5}, we derive
\begin{equation}
\begin{split}
\sin(\tau_2\eta)
&=\frac{(\omega^5-m_2\omega^3+m_4\omega)(n_1\omega^4
-n_3\omega^2+ n_5)}{(n_1\omega^4-n_3\omega^2+ n_5)^2
+(n_2\omega^3-n_4\omega)^2}
-\frac{(m_1\omega^4-m_3\omega^2+m_5)(-n_2\omega^3+n_4\omega)}{(n_1\omega^4-n_3\omega^2+n_5)^2
+ (n_2\omega^3-n_4\omega)^2}\\
&=\gamma(\eta).
\end{split}
\end{equation}

If equation (\ref{eq_etai}) has at least one positive root,
say $\omega_0$, then (\ref{char_delay3}) will have a pair
of purely imaginary roots $\pm i\omega_0$ corresponding
to the delay $\tau_2^*$. Without loss of generality,
we assume that (\ref{eq_etai}) has ten positive real roots,
which are denoted as $\eta_1$, $\eta_2$, \ldots, $\eta_{10}$,
respectively. For every fixed $\omega_k$, $k = 1, 2, \ldots , 10$,
the corresponding critical values of time delay $\tau_2$ are
\begin{equation*}
\tau_{2,n}^{(j)}
=\frac{1}{\omega_k}\arcsin\gamma(\eta)+\frac{2\pi j}{\omega_k},
\quad j = 0, 1, 2, \ldots
\end{equation*}
Let
\begin{equation}
\label{tau_0}
\tau_2^* = \min\left\{\tau_{2,k}^{(j)}\right\},
\quad
\omega_0 = \omega_k|_{\tau_2=\tau_2^*},
\quad k = 1, 2, \ldots , r,
\quad j=0,1,2, \ldots
\end{equation}
Taking the derivative of \eqref{char_delay3} with respect to $\tau_2$,
it is easy to obtain that
\begin{equation}
\left(\frac{d\xi}{d\tau_2}\right)^{-1}
=-\frac{5\xi^4 + 4m_1\xi^3 + 3m_2\xi^2 + 2m_3\xi
+m_4}{\xi(\xi^5 + m_1\xi^4 + m_2\xi^3+  m_3\xi^2+m_4\xi+m_5)}
+\frac{ 4n_1\xi^3+3n_2\xi^2 + 2n_3\xi
+ n_4}{\xi( n_1\xi^4 + n_2\xi^3 + n_3\xi^2 + n_4\xi+n_5)}
-\frac{\tau_2}{\xi}.
\end{equation}
Thus, we have
\begin{equation}	
\mbox{sign}\left[\frac{d~Re \{\xi(\tau_2)\}}{d\tau_2}\right]_{\tau_2
=\tau_{2,k}^{ (j)}}=\mbox{sign}\left[Re \left(
\frac{d\xi}{d\tau_2}\right)\right]_{\xi=i\omega_0}.
\end{equation}
Using \eqref{eq_etai}, and after simple calculations,
it can be shown that for $\xi=i\omega_0$ one has
\begin{equation}	
\mbox{sign}\left[\frac{d~Re
\{\xi(\tau_2)\}}{d\tau_2}\right]_{\tau_2=\tau_{2,k}^{ (j)} }
=\mbox{sign}\left[\frac{H'(l_k)}{\omega_0^2(n_2\omega_0^2-n_4)^2
+(n_1\omega_0^4-n_3\omega_0^2+n_5)^2}\right].
\end{equation}
We conclude that
$\frac{dRe(\xi(\tau_2))}{d\tau_2}|_{\tau_2=\tau_{2,k}^{ (j)}}$
and $H'(l_k)$ have the same sign.

We summarize the above discussion in the following theorem.

\begin{theorem}
\label{theorem5}
Let $\omega_0$ and $\tau_2^*$ be defined by \eqref{tau_0}.
Consider the conditions
\begin{itemize}
	
\item[(i)] $p_5 < 0$;

\item[(ii)] $p_5\geq0$, $b_1^*=0$, $\Delta_0\geq0$ and $a_1^*<0$
or $c_1^*\leq0$ and there exists $l^*\in\{ l_1,l_2,l_3,l_4\}$
such that $l^*>0$ and $H(l^*)\leq0$;

\item[(iii)] $p_5\geq0$, $b_1^*\neq0$, $d_0^*>b_1^*$
and $\Delta_2 \geq 0$ or $\Delta_3 \geq 0$ and
there exist $l^*\in\{ l_1,l_2,l_3,l_4\}$
such that $l^*>0$ and $H(l^*)\leq0$;

\item[(iv)] if $\Delta_0 \geq0$  and $b_1^*\neq0$,
then (\ref{eq19}) has a positive real root
if and only if  $b_1^*/4(a_1^*-d_0^*)+\frac{1}{2}d_0^*=0$
and $\bar{l}>0$ and $H(\bar{l})\leq0$,
where $\bar{l}=b_1^*/2(a_1^*-d_0^*)-\frac{1}{5}p_1$.
\end{itemize}
The following holds:

\begin{enumerate}
\item If none of conditions (i), (ii), (iii) and (iv) are satisfied,
then the equilibrium $E^*$ is locally asymptotically stable for any $\tau_2\geq0$.

\item If any of the conditions (i), (ii), (iii) or (iv) is satisfied,
then the equilibrium $E^*$ is locally asymptotically
stable for $\tau_2\in[0, \tau_2^*)$.

\item If one of the conditions (i), (ii), (iii) or (iv) holds
and $H(\bar{l_0})\neq0$, then $E^*$ undergoes Hopf bifurcation
as $\tau_2$ passes through the critical value $\tau_2^*$.
\end{enumerate}
\end{theorem}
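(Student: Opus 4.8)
The plan is to establish Theorem~\ref{theorem5} as a standard application of the Beretta--Kuang / Hassard-type framework for Hopf bifurcation in delay differential equations, building directly on the analysis already carried out for the characteristic equation \eqref{char_delay3}. The skeleton of the argument is: (1) reduce the question of stability switches to the existence of purely imaginary roots $\xi = i\omega$; (2) show such roots exist precisely under one of conditions (i)--(iv) via Lemma~\ref{lem:d5}; (3) verify the transversality condition so that a genuine crossing occurs.

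First I would recall that the equilibrium $E^*$ of the nondelayed system ($\tau_2 = 0$) is assumed asymptotically stable, i.e., all roots of \eqref{char_delay3} with $\tau_2 = 0$ have negative real parts (this is the content of the Routh--Hurwitz conditions \eqref{eq:cond:B:as:E*}). By continuity of the roots of \eqref{char_delay3} as functions of $\tau_2$, the stability of $E^*$ can only be lost if, as $\tau_2$ increases, a root crosses the imaginary axis. A root can only cross at a purely imaginary value $\xi = i\omega$ with $\omega > 0$, and such crossings are governed by the reduced real polynomial \eqref{eq19} in $l = \omega^2$. If \eqref{eq19} has no positive real root, no crossing can occur and $E^*$ remains asymptotically stable for all $\tau_2 \geq 0$; this proves part~1. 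If \eqref{eq19} does have a positive root — which by Lemma~\ref{lem:d5} happens exactly when one of conditions (i), (ii), (iii), (iv) holds — then there is a smallest delay $\tau_2^*$ (defined in \eqref{tau_0}) at which $\pm i\omega_0$ are roots, and for $\tau_2 \in [0, \tau_2^*)$ no root has yet crossed, so $E^*$ is still asymptotically stable; this gives part~2.

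For part~3 (the actual Hopf bifurcation), I would invoke the transversality computation already displayed in the excerpt: the sign of $\frac{d\,\mathrm{Re}\,\xi(\tau_2)}{d\tau_2}$ at $\tau_2 = \tau_{2,k}^{(j)}$ equals the sign of $H'(l_k)$. The condition $H(\bar l_0) \neq 0$ (phrased in the theorem as $H'(l_k) \neq 0$ at the relevant root, i.e., $l_k$ is a simple root of $H$) guarantees this derivative is nonzero, hence the pair of conjugate roots crosses the imaginary axis transversally as $\tau_2$ passes through $\tau_2^*$. Together with the fact that $i\omega_0$ is a simple root of the characteristic equation and no other root sits on the imaginary axis at $\tau_2^*$, the Hopf bifurcation theorem for functional differential equations (e.g., Hassard, Kazarinoff and Wan) applies and yields a branch of periodic solutions bifurcating from $E^*$ at $\tau_2 = \tau_2^*$. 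One should also note, as the excerpt does implicitly, that the other eigenvalue-branches stay bounded away from $\mathrm{Re} = 0$ at $\tau_2^*$, which follows because the only imaginary roots arise from positive roots of $H$ and $\tau_2^*$ is the minimal such critical delay.

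The main obstacle, and the part that needs the most care, is the bookkeeping in step~(2): matching each clause of conditions (i)--(iv) to the corresponding clause of Lemma~\ref{lem:d5} and checking that the auxiliary quantities $a_1^*, b_1^*, c_1^*, d_0^*, \Delta_0, \Delta_2, \Delta_3$ appearing in the theorem are exactly those defined in \eqref{20} evaluated at $E^*$. In particular one must confirm that ``none of (i)--(iv) holds'' is genuinely equivalent to ``$H$ has no positive real root'' — i.e., that Lemma~\ref{lem:d5} is exhaustive over the sign cases of $p_5$, $b_1$, and the discriminants — so that the dichotomy in parts~1 and~2 is complete. The transversality step, by contrast, is essentially mechanical once the formula $\mathrm{sign}[d\,\mathrm{Re}\,\xi/d\tau_2] = \mathrm{sign}[H'(l_k)]$ from the excerpt is accepted; the only subtlety there is ensuring the denominator $\omega_0^2(n_2\omega_0^2 - n_4)^2 + (n_1\omega_0^4 - n_3\omega_0^2 + n_5)^2$ does not vanish, which holds automatically unless $\omega_0$ is also a root of the pair \eqref{eqqn5} with both sides zero, a degenerate situation excluded by $H'(l_k) \neq 0$.
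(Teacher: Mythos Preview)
Your proposal is correct and follows essentially the same approach as the paper: the paper presents the theorem explicitly as a summary of the preceding discussion, which consists precisely of the reduction to purely imaginary roots via \eqref{eqqn5}--\eqref{eq19}, the invocation of Lemma~\ref{lem:d5} to decide when $H(l)=0$ has a positive root, and the transversality identity $\mathrm{sign}\bigl[d\,\mathrm{Re}\,\xi/d\tau_2\bigr]=\mathrm{sign}\bigl[H'(l_k)\bigr]$. If anything, your write-up is more explicit than the paper's about the continuity-of-roots argument and the appeal to the Hopf bifurcation theorem for FDEs, and you correctly flag the bookkeeping issue of matching conditions (i)--(iv) to the clauses of Lemma~\ref{lem:d5}; the paper simply takes these points for granted.
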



\subsubsection{Case II: $\tau_1>0$ and $\tau_2=0$}

At the endemic equilibrium $E^*$, the characteristic equation
has delay-dependent coefficients (i.e., for $\tau_1>0$)
and it is quite involved. Therefore, it is difficult to
obtain analytically information on the nature of the eigenvalues
and on the conditions for occurrence of stability switches.
However, the nature of the eigenvalues can be investigated
at the endemic state through numerical simulations, which
we do in Section~\ref{sec5}.


\section{Numerical Simulations}
\label{sec5}

In this section, we provide some numerical simulations for illustrating
the dynamics of the system. The numerical simulation of model \eqref{mod2}
is plotted with the basic model parameters as in Table~\ref{table1}.
In Figure~\ref{fig2}, we plot the solutions of the model variables
corresponding to uninfected epithelial cells ($T$), infected epithelial cells ($I$),
SARS-CoV-2 virus ($V$), ACE2 receptor of epithelial cells ($E$),
and CTL responses ($C$) for the non-delayed and delayed system.
The blue line indicates the non-delayed system whereas the red line
represents the delayed system with $\tau_1=5$, keeping $\tau_2=0$.
The initial point is obtained by perturbing $T$ from the non-trivial equilibrium
values $E_1$ given by $(\tilde{T}, \tilde{I}, \tilde{V}, \tilde{E}, \tilde{C})
=(46.78, 2.91,14.47,9.8, 1.05)$, using the set of parameters from Table~\ref{table1}.
In Figure~\ref{fig2}, we choose the initial conditions as $T(\theta)=45$,
$I(\theta)=4$, $V(\theta)=20$, $E(\theta)= 9.5$, $C(\theta)=2$ for $\theta \in (-\tau,0]$.
When $\tau_1=\tau_2=0$ days, the non-trivial equilibrium $E$ is locally asymptotically stable.
But in presence of a delay, the system attains its lower concentration with respect to $p$.
The system shows similar behaviors with respect to the parameter $q$ (see Figure~\ref{fig3}).
Thus, the lytic and nonlytic effect on the model in presence of delay shows almost similar behaviors.

Figures~\ref{fig4} and \ref{fig6} show that increasing the value of the delay
makes the oscillation in the system solution and it persists
for a longer period of time. The parameter values used in $H(l)$
have one positive simple real root and the corresponding value of $\tau_2$ is days.
Our simulations are consistent with the theoretical findings.
When $\tau_1=\tau_2=0$, the endemic equilibrium $E^*$ is locally asymptotically stable.
The initial oscillation persist for a longer period for an increasing value
of the delay $\tau_2$ and the value of the delay passes through its critical value
(see Figures~\ref{fig5} and \ref{fig7}), then the system attains Hopf bifurcation.

\begin{figure}[htb!]
\centering
\includegraphics[width=5in,height=5in]{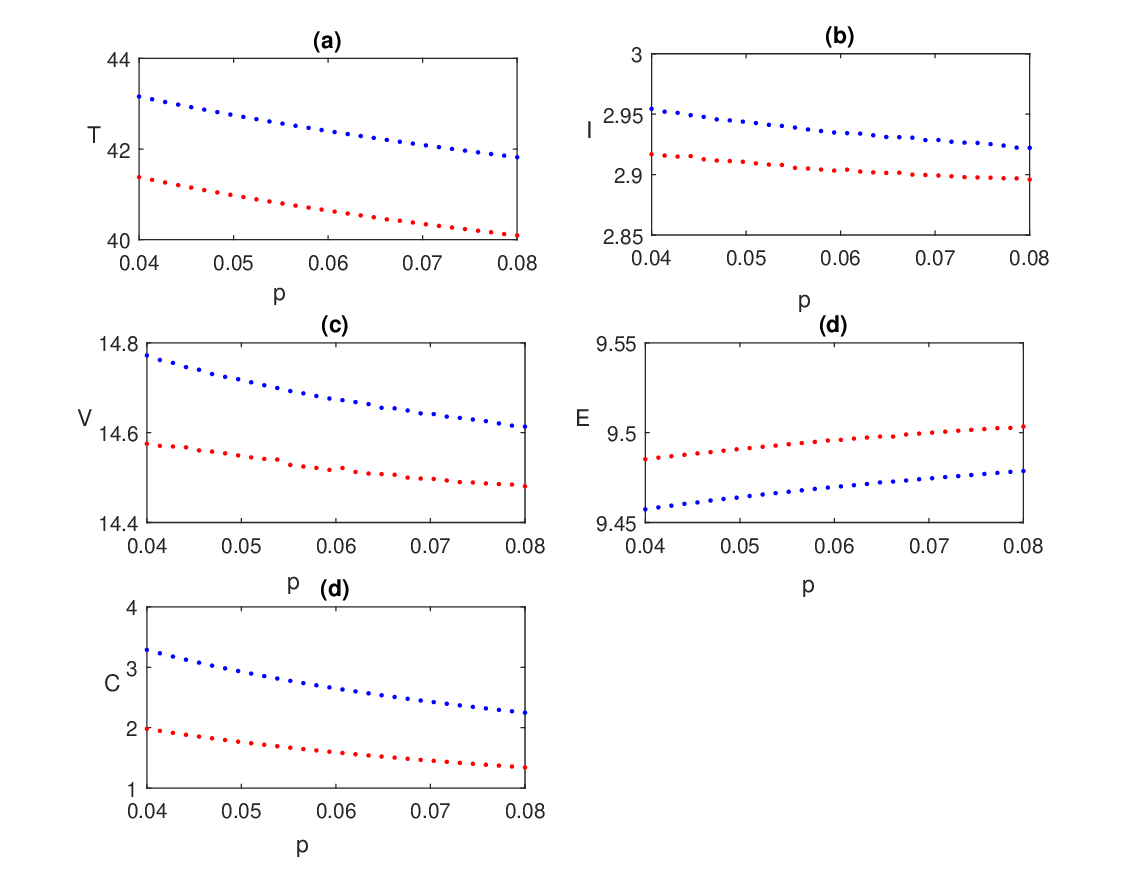}
\caption{Effect of parameter $p$ (the efficacy of the lytic effect)
on the system when $\tau_1=0=\tau_2$.
The other parameter values are given in Table~\ref{table1}.}
\label{fig2}
\end{figure}
\begin{figure}[htb!]
\centering
\includegraphics[width=5in,height=5in]{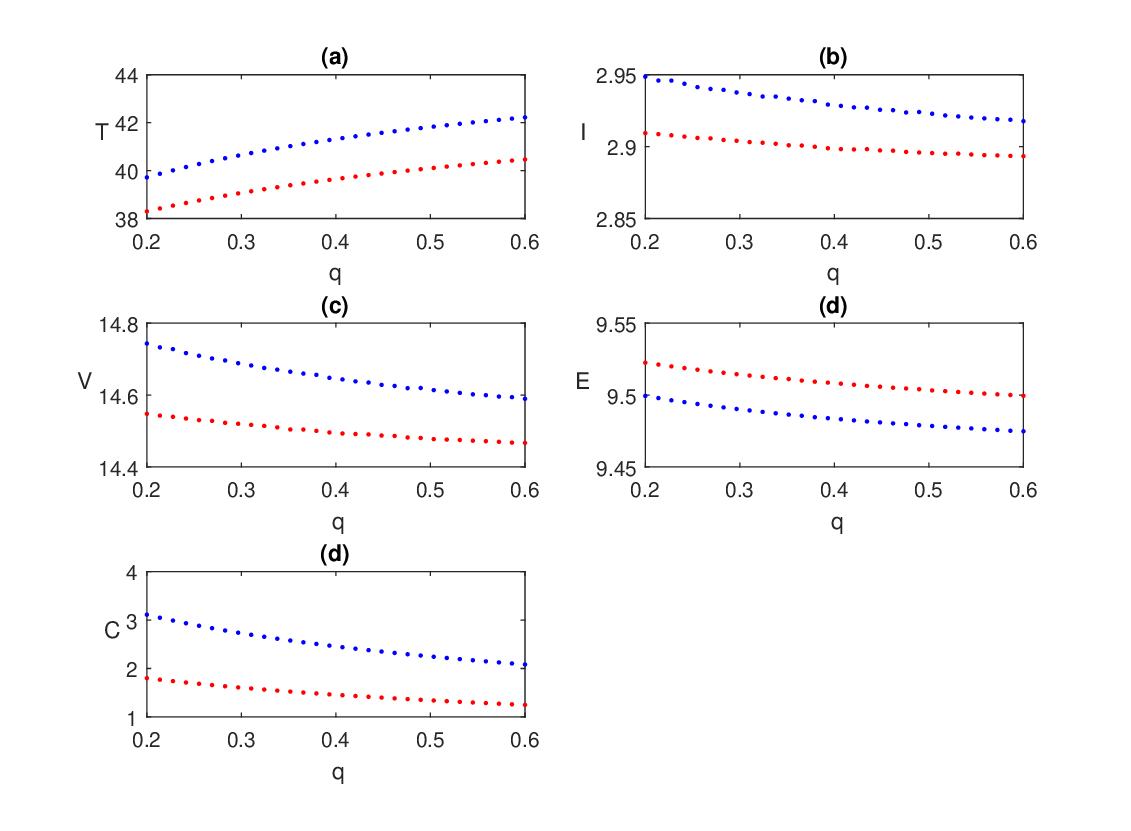}
\caption{Effect of parameter $q$ (the efficacy of the nonlytic effect)
on the system when $\tau_1=0=\tau_2$.
The other parameter values are given in Table~\ref{table1}.}
\label{fig3}
\end{figure}
	
\begin{figure}[htb!]
\centering
\includegraphics[width=5in,height=5in]{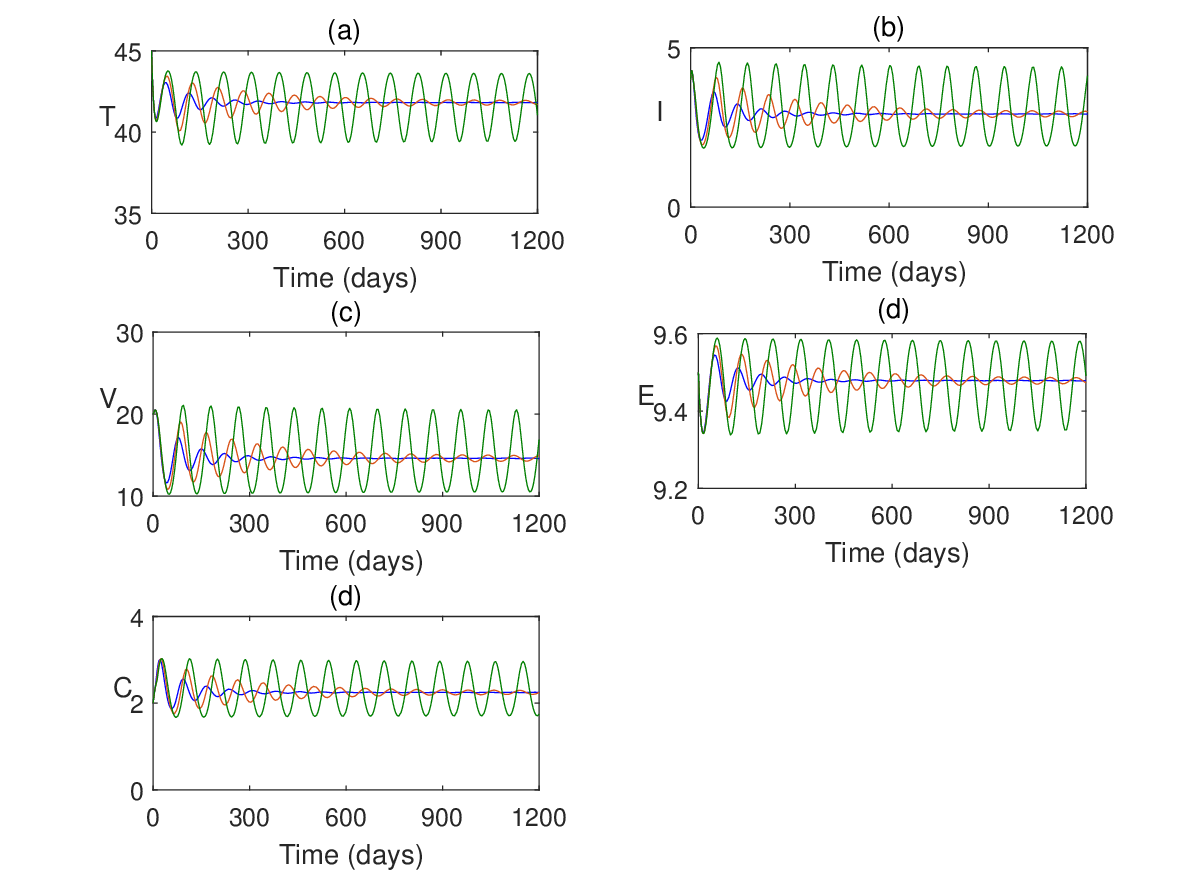}
\caption{Numerical solution of the system with $\tau_1=0$
and for different values of $\tau_2$.
The other parameter values are given in Table~\ref{table1}.}
\label{fig4}
\end{figure}
\begin{figure}[htb!]
\centering
\includegraphics[width=5in,height=5in]{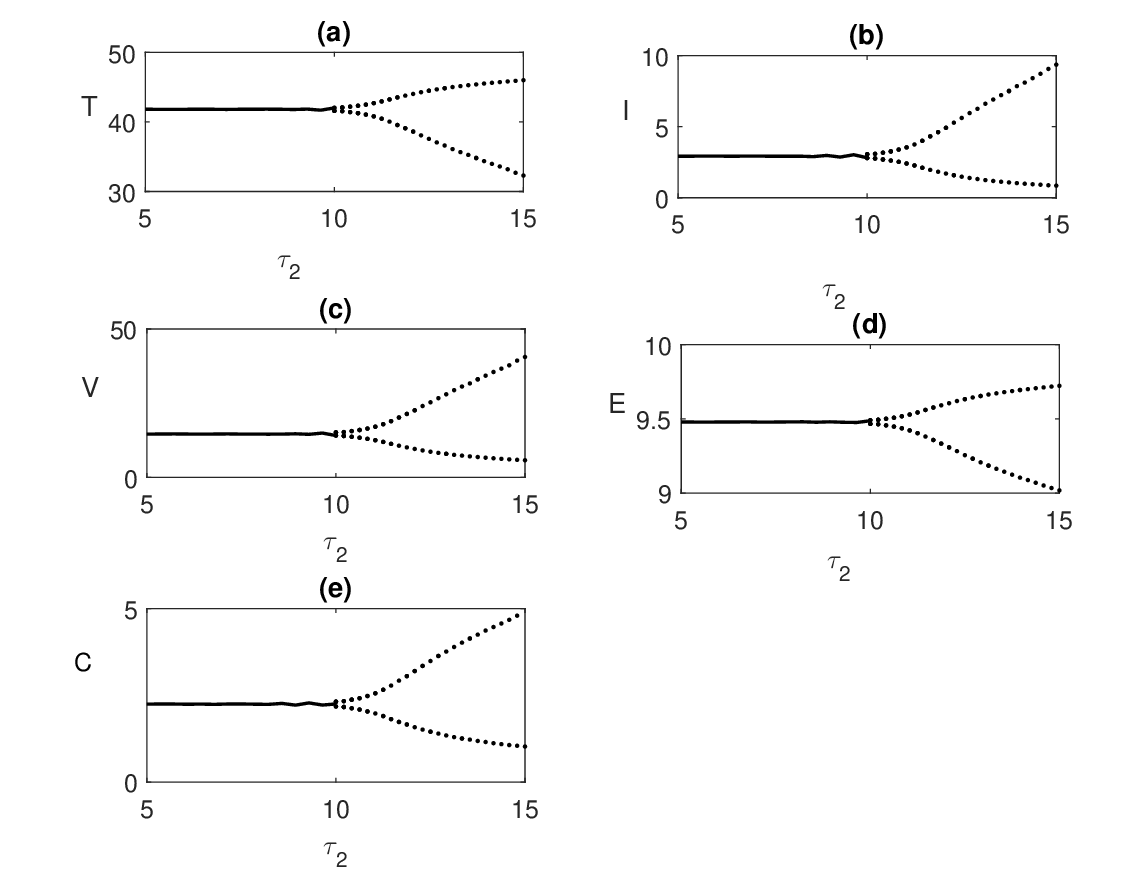}
\caption{Bifurcation diagram of the system taking $\tau_2$
as the bifurcation parameter and $\tau_1=0$. Other parameters
are the same as for Figure~\ref{fig4}.
The solid line indicates the stable endemic equilibrium.}
\label{fig5}
\end{figure}
\begin{figure}[htb!]
\centering
\includegraphics[width=5in,height=5in]{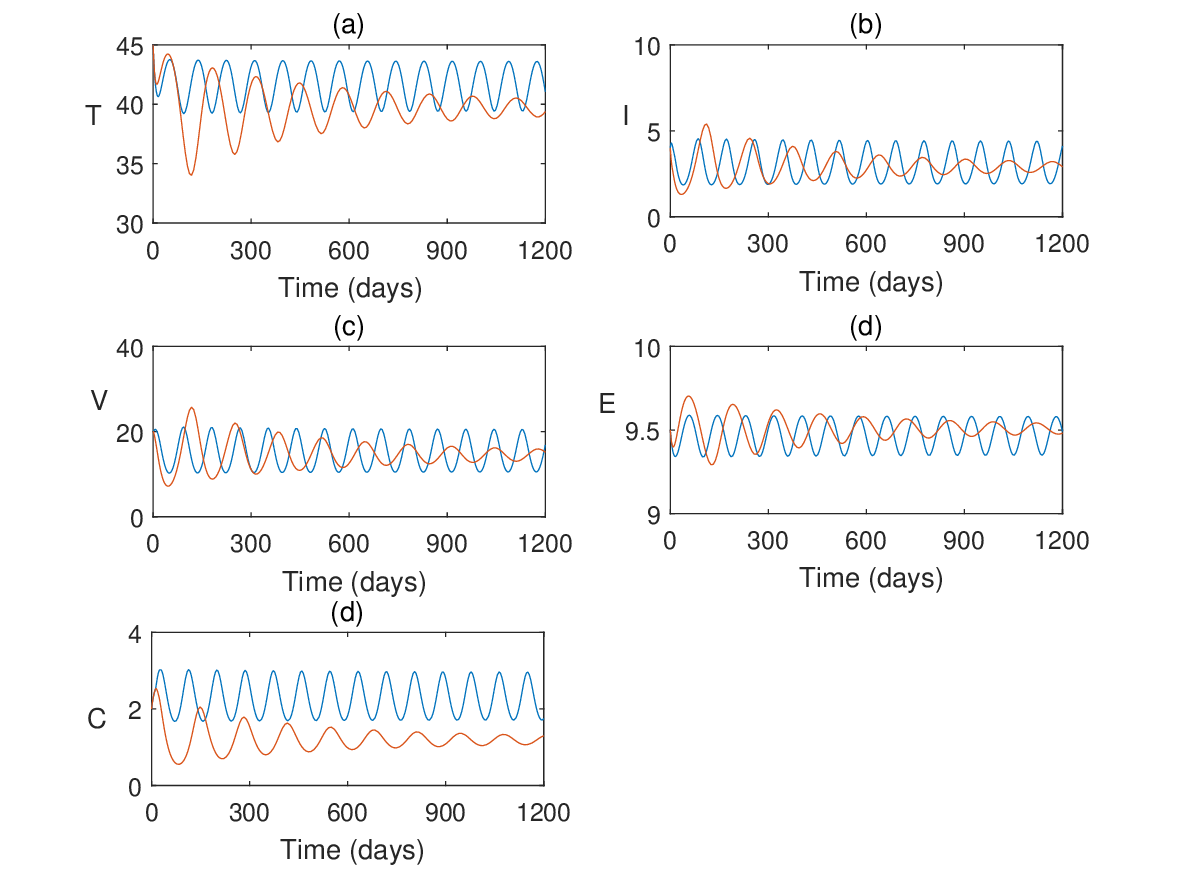}
\caption{Numerical solution of the system with $\tau_2=12$
and for different values of $\tau_1$. The blue line is for
$\tau_1=6$ and the red line is for $\tau_1=1$.}
\label{fig6}
\end{figure}
\begin{figure}[htb!]
\centering
\includegraphics[width=5in,height=5in]{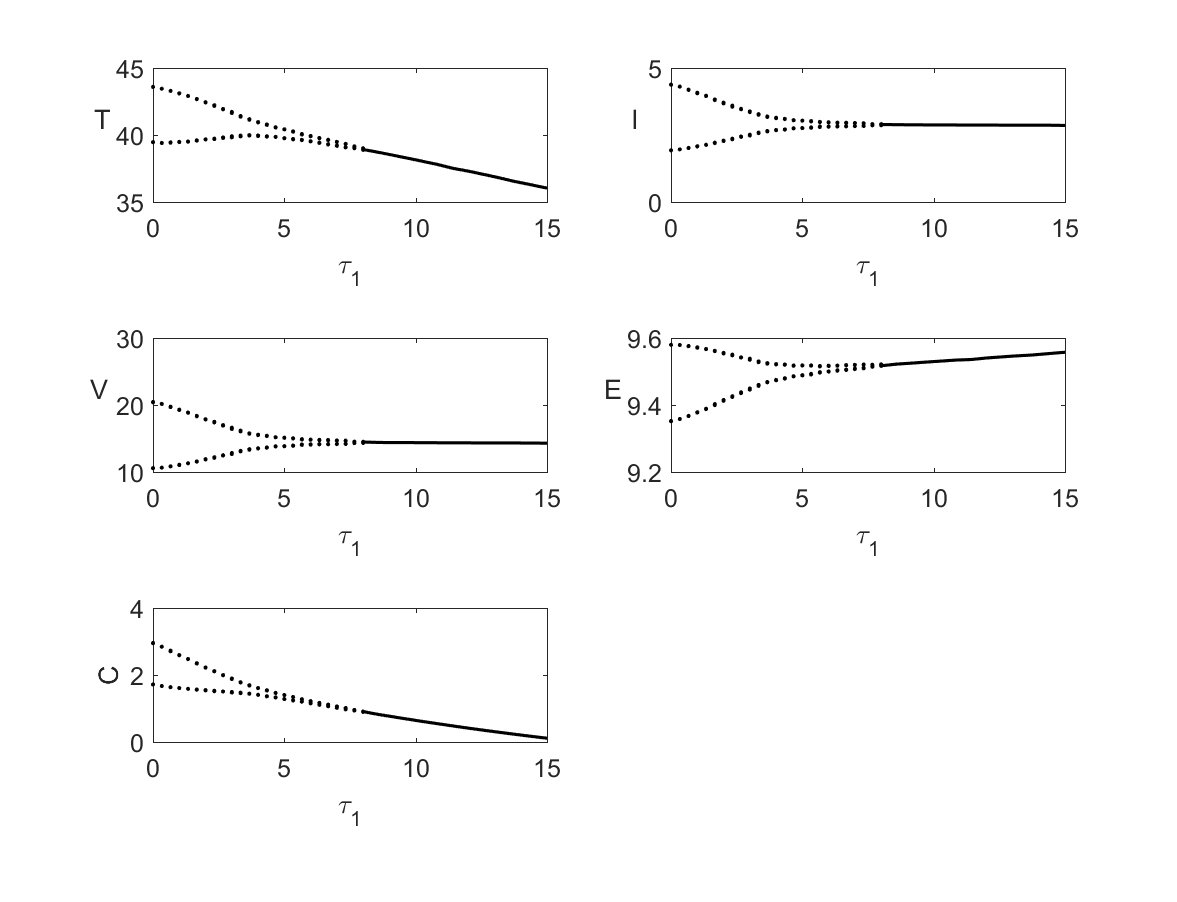}
\caption{Bifurcation diagram of the system taking $\tau_1$
as the bifurcation parameter ($\tau_2=12$). The solid line
indicates the stable endemic equilibrium.}
\label{fig7}
\end{figure}


\section{Discussion}
\label{sec6}

We have studied the CTL responses on SARS-CoV-2 infected epithelial cells
through mathematical modeling. In our proposed model, we have considered a delay
in the disease transmission term and a CTL proliferation term.
We have studied the delay-induced system both analytically and numerically.

Our analytical findings reveal that the non-delayed system attains
its infection-free state when the basic reproduction number $R_0$ given by \eqref{eq:R0}
is less than 1 and that the system moves to the infected state when $R_0>1$.
We have also witnessed that the infected equilibrium $E^*$ becomes unstable
as the time delay increases. Also, it is observed that as the time delay increases
the system remains unstable or it may cycle through an infinite sequence of regions,
alternately locally asymptotically stable and unstable.
Again, we found the threshold value of the delays
for which the system undergoes Hopf bifurcation.

Numerical findings of the model confirm our analytical results. We found for
the estimated parameter values, as mentioned in Table~\ref{table1}, that
the infected state becomes unstable as the time delay increases but did not
stabilize again at its higher values. In the unstable region, the system shows
an oscillatory behavior at the critical time delay. Our analysis also demonstrates
that the time delay, as well as the parameters $p$ and $q$, play a crucial role
in the stability of the system. The competition between the parameters $p$, $q$,
and the delays to control the system depends on their respective numerical values.

Summarizing, our analytical, as well as numerical findings, tell us that increasing
the immune time delay $\tau_2$ in the system causes always an increasing of oscillations
and instability while, in contrast, the incubation delay $\tau_1$
has a stabilizing role.


\section{Conclusion}

Time delay models are important tools in the modeling of infectious 
diseases because they allow for more realistic representations of the spread and 
course of infection. In this study, a time delay model of COVID-19 has been proposed 
and analyzed. 'Latent period' and the 'time for immune response' have been considered 
as the time delay parameters for this study. We demonstrated that the delay in immune 
response has a destabilizing effect, whereas latent delay contributes to stabilization. 
The proposed time delay model is more realistic. The outcomes of this study may 
be beneficial for appropriate treatment of COVID-19.


\section*{Data availability}

The data supporting the findings of this study are available within the article. 


\section*{Acknowledgements}

Torres was supported by FCT through project UIDB/04106/2020 (CIDMA).
The authors are thankful to a reviewer for useful comments.




\begin{thebibliography}{99}

\bibitem{2dg}
2-DG, Dr. Reddy's Laboratories,
Antiviral and anti-inflammatory drug,
Accessed February 19, 2022.
\url{https://www.drreddys.com/2dg}

\bibitem{CDC:vaccines}
CDC, Centers for Disease Control and Prevention,
Different COVID-19 Vaccines,
Accessed January 21, 2022.
\url{https://www.cdc.gov/coronavirus/2019-ncov/vaccines/different-vaccines.html}

\bibitem{book:Covid-19}
P. Agarwal, J. J. Nieto, M. Ruzhansky and D. F. M. Torres,
Analysis of Infectious Disease Problems (Covid-19) and Their Global Impact,
Springer Nature, Singapore, 2021.

\bibitem{PKT1} 
R. K. Rai, S. Khajanchi, P. K. Tiwari, E. Venturino and A. K. Misra, 
Impact of social media advertisements on the transmission dynamics of COVID-19 pandemic in India, 
J. Appl. Math. Comput. {\bf 68} (2022), 19--44.

\bibitem{PKT2} 
P. K. Tiwari, R. K. Rai, S. Khajanchi, R. K. Gupta and A. K. Misra, 
Dynamics of coronavirus pandemic: effects of community awareness and global information campaigns, 
Eur. Phys. J. Plus {\bf 136} (2021), Art.~994.

\bibitem{Samui}
P. Samui, J. Mondal\ and\ S. Khajanchi,
A mathematical model for COVID-19 transmission dynamics with a case study of India,
Chaos Solitons Fractals {\bf 140} (2020), 110173, 11~pp.

\bibitem{Mondal1}
J. Mondal, P. Samui and A. N. Chatterjee,
Optimal control strategies of non-pharmaceutical and pharmaceutical interventions for COVID-19 control,
J. Interdiscip. Math. {\bf 24} (2021), no~1, 125--153.

\bibitem{Rahman}
B. Rahman, S. H. A. Khoshnaw, G. O. Agaba and F. Al Basir,
How containment can effectively suppress the outbreak of COVID-19: A mathematical modeling,
Axioms {\bf 10} (2021), no.~3, Art.~204, 13~pp.

\bibitem{tang2017novel}
S. Tang, W. Ma\ and\ P. Bai,
A novel dynamic model describing the spread of the MERS-CoV
and the expression of dipeptidyl peptidase 4,
Comput. Math. Methods Med. {\bf 2017} (2017), Art. ID 5285810, 6~pp.

\bibitem{PKT3} 
A. K. Srivastav, P. K. Tiwari, P. K. Srivastava, M. Ghosh and Y. Kang, 
A mathematical model for the impacts of face mask, hospitalization and quarantine 
on the dynamics of COVID-19 in India: deterministic vs. stochastic, 
Math. Biosci. Eng. {\bf 18} (2021), no. 1, 182--213.

\bibitem{PKT4} 
M. Majumder, P. K. Tiwari and S. Pal,
Impact of saturated treatments on HIV-TB dual epidemic as a consequence of COVID-19: 
optimal control with awareness and treatment, 
Nonlinear Dyn. {\bf 109} (2022), 143--176.

\bibitem{Zine}
H. Zine, A. Boukhouima, E. M. Lotfi, M. Mahrouf, D. F. M. Torres\ and\ N. Yousfi,
A stochastic time-delayed model for the effectiveness of Moroccan COVID-19 deconfinement strategy,
Math. Model. Nat. Phenom. {\bf 15} (2020), Paper No.~50, 14~pp.

\bibitem{Mahrouf}
M. Mahrouf, A. Boukhouima, H. Zine, E. M. Lotfi, D. F. M. Torres\ and\ N. Yousfi,
Modeling and forecasting of COVID-19 spreading by delayed stochastic differential equations,
Axioms {\bf 10} (2021), no.~1, Art.~18, 16~pp.

\bibitem{Ndairou01}
F. Nda\"{\i}rou, I. Area, J. J. Nieto, C. J. Silva and D. F. M. Torres,
Fractional model of COVID-19 applied to Galicia, Spain and Portugal,
Chaos Solitons Fractals {\bf 144} (2021), Paper No. 110652, 7~pp.

\bibitem{Silvaetal}
C.J. Silva, C. Cruz, D.F.M. Torres, A.P. Munuzuri, A. Carballosa,
I. Area, J.J. Nieto, R. Fonseca-Pinto, R. Passadouro da Fonseca,
E. Soares dos Santos, W. Abreu and J. Mira,
Optimal control of the COVID-19 pandemic: controlled sanitary deconfinement in Portugal,
Scientific Reports {\bf 11} (2021), Art.~3451, 15~pp.

\bibitem{Mondal3}
J. Mondal, P. Samui and A. N. Chatterjee,
Dynamical demeanour of SARS-CoV-2 virus undergoing immune response mechanism in COVID-19 pandemic,
Eur. Phys. J. Spec. Top. {\bf 231} (2022), 3357--3370.

\bibitem{Cantin}
C. J. Silva, G. Cantin, C. Cruz, R. Fonseca-Pinto,
R. Passadouro da Fonseca, E. Soares dos Santos and D. F. M. Torres,
Complex network model for COVID-19:
human behavior, pseudo-periodic solutions and multiple epidemic waves,
J. Math. Anal. Appl. {\bf 514} (2022), no.~2, Art.~125171, 25~pp.

\bibitem{PKT5} 
R. K. Rai, P. K. Tiwari and S. Khajanchi, 
Modeling the influence of vaccination coverage on the dynamics 
of COVID-19 pandemic with the effect of environmental contamination, 
Math. Methods Appl. Sci. {\bf 46} (2023), no.~12, 12425--12453.

\bibitem{qureshi2019fractional}
S. Qureshi and A. Yusuf,
Fractional derivatives applied to MSEIR problems: Comparative study with real world data,
Eur. Phys. J. Plus {\bf 134} (2019), Art.~171, 13~pp.

\bibitem{qureshi2019transmission}
S. Qureshi, A. Yusuf, A. A. Shaikh and M. Inc,
Transmission dynamics of varicella zoster virus modeled by classical
and novel fractional operators using real statistical data,
Phys. A {\bf 534} (2019), 122149, 22~pp.

\bibitem{Roy2}
P. K. Roy, A. N. Chatterjee\ and\ X.-Z. Li,
The effect of vaccination to dendritic cell
and immune cell interaction in HIV disease progression,
Int. J. Biomath. {\bf 9} (2016), no.~1, 1650005, 20~pp.

\bibitem{Chatterjee03}
A. N. Chatterjee\ and\ P. K. Roy,
Anti-viral drug treatment along with immune activator IL-2:
a control-based mathematical approach for HIV infection,
Internat. J. Control {\bf 85} (2012), no.~2, 220--237.

\bibitem{Roy1}
P. K. Roy, A. N. Chatterjee, D. Greenhalgh and Q. J. A. Khan,
Long term dynamics in a mathematical model of HIV-1 infection
with delay in different variants of the basic drug therapy model,
Nonlinear Anal. Real World Appl. {\bf 14} (2013), no.~3, 1621--1633.

\bibitem{Ghosh}
S. Ghosh, A. N. Chatterjee, P. K. Roy, N. Grigorenko, E. Khailov and E. Grigorieva,
Mathematical modeling and control of the cell dynamics in leprosy,
Comput. Math. Model. {\bf 32} (2021), no.~1, 52--74.

\bibitem{MyID:480}
K. Allali, S. Harroudi\ and\ D. F. M. Torres,
Optimal control of an HIV model with a trilinear antibody growth function,
Discrete Contin. Dyn. Syst. Ser. S {\bf 15} (2022), no.~3, 501--518.

\bibitem{Wang}
B. Wanga, J. Mondal, P. Samui, A. N. Chatterjee and A. Yusuf,
Effect of an antiviral drug control
and its variable order fractional network in host COVID-19 kinetics,
Eur. Phys. J. Spec. Top. {\bf 231} (2022), 1915--1929.

\bibitem{Amar}
A. N. Chatterjee and F. Al Basir,
A model for SARS-CoV-2 infection with treatment,
Comput. Math. Method Med. {\bf 2020} (2020), Art. ID 1352982, 11~pp.

\bibitem{hernandez2020host}
E. A. Hernandez-Vargas\ and\ J. X. Velasco-Hernandez,
In-host mathematical modelling of COVID-19 in humans,
Annu. Rev. Control {\bf 50} (2020), 448--456.

\bibitem{wang2020modeling}
S. Wang, Y. Pan, Q. Wang, H. Miao, A. N. Brown and L. Rong,
Modeling the viral dynamics of SARS-CoV-2 infection,
Math. Biosci. {\bf 328} (2020), 108438, 12~pp.

\bibitem{Chatterjee1}
A. N. Chatterjee\ and\ B. Ahmad,
A fractional-order differential equation model of COVID-19 infection of epithelial cells,
Chaos Solitons Fractals {\bf 147} (2021), Paper No.~110952, 6~pp.

\bibitem{Chatterjee2}
A. N. Chatterjee, F. Al Basir, M. A. Almuqrin, J. Mondal and I. Khan,
SARS-CoV-2 infection with lytic and non-lytic immune responses:
A fractional order optimal control theoretical study,
Results Phys. {\bf 26} (2021), Art.~104260, 9~pp.

\bibitem{Tekle}
T. Abraha, F. Al Basir, L. L. Obsu and D. F. M. Torres,
Controlling crop pest with a farming awareness based integrated approach and optimal control,
Comput. Math. Methods {\bf 3} (2021), no.~6, Paper No. e1194, 24~pp.

\bibitem{Abraha}
T. Abraha, F. Al Basir, L. L. Obsu and D. F. M. Torres,
Pest control using farming awareness:
impact of time delays and optimal use of biopesticides,
Chaos Solitons Fractals {\bf 146} (2021), Paper No. 110869, 11~pp.

\bibitem{Ref_delay5}
T. Zhang, H. Jiang and Z. Teng,
On the distribution of the roots of a fifth degree exponential
polynomial with application to a delayed neural network model,
Neurocomputing {\bf 72} (2009), no.~4-6, 1098--1104.

\end{thebibliography}
\end{document}